\numberwithin{equation}{section}
\theoremstyle{plain}
\newtheorem{theorem}{Theorem}
\numberwithin{theorem}{section}
\newtheorem{lemma}[theorem]{Lemma}          
\newtheorem{proposition}[theorem]{Proposition}
\theoremstyle{definition}
\newtheorem{definition}[theorem]{Definition}
\newtheorem{example}[theorem]{Example}
\newtheorem{remark}[theorem]{Remark}
\def \a {{\alpha}}
\def \b {{\beta}}
\def \xbar {\bar{x}}
\newcommand{\<}{\langle}
\renewcommand{\>}{\rangle}
\renewcommand{\(}{\left(}
\renewcommand{\)}{\right)}
\renewcommand{\[}{\left[}
\renewcommand{\]}{\right]}
\newcommand\Eb{\mathbb{E}}
\newcommand\Qb{\mathbb{Q}}
\newcommand\Rb{\mathbb{R}}
\newcommand\Ib{\mathbb{I}}
\newcommand\R{\mathbb{R}}
\newcommand\Ac{\mathscr{A}}
\newcommand\Bc{\mathscr{B}}
\newcommand\Ec{\mathscr{E}}
\newcommand\Fc{\mathscr{F}}
\newcommand\Gc{\mathscr{G}}
\newcommand\Lc{\mathscr{L}}
\newcommand\Mc{\mathscr{M}}
\newcommand\Pc{\mathscr{P}}
\renewcommand\phi{\varphi}
\newcommand\Om{\Omega}
\newcommand\sig{\sigma}
\newcommand\gam{\gamma}
\newcommand\Gam{\Gamma}
\newcommand\lam{\lambda}
\newcommand\del{\delta}
\newcommand\Del{\Delta}
\newcommand\xb{\bar{x}}
\newcommand\ub{\bar{u}}
\newcommand\sigb{\bar{\sig}}
\newcommand\Hv{\mathbf{H}}
\newcommand\phih{\hat{\phi}}
\newcommand\Nt{\widetilde{N}}
\renewcommand\d{\partial}
\newcommand\ii{\mathtt{i}}
\newcommand\dd{\mathrm{d}}
\newcommand\ee{\mathrm{e}}
\def \phi {{\varphi}}
\begin{document}

\title{Asymptotics for $d$-dimensional L\'evy-type processes}

\author{
Matthew Lorig
\thanks{ORFE Department, Princeton University, Princeton, USA.
Work partially supported by NSF grant DMS-0739195.}
\and
Stefano Pagliarani
\thanks{Centre de Math\'ematiques Appliqu\'ees, Ecole Polytechnique and CNRS, Route de Saclay, 91128 Palaiseau Cedex, France. Email: {\tt  pagliarani@cmap.polytechnique.fr}. Work partially supported by the Chair {\it Financial Risks} of the {\it Risk Foundation}.}
\and
Andrea Pascucci
\thanks{Dipartimento di Matematica,
Universit\`a di Bologna, Bologna, Italy}
}

\date{This version: \today}

\maketitle

\begin{abstract}
We consider a general $d$-dimensional L\'evy-type process with killing. Combining the classical
Dyson series approach with a novel polynomial expansion of the generator $\Ac(t)$ of the
L\'evy-type process, we derive a family of asymptotic approximations for transition densities and
European-style options prices.  Examples of stochastic volatility models with jumps are provided
in order to illustrate the numerical accuracy of our approach.  The methods described in this
paper extend the results from \cite{pascucci-parametrix}, \cite{pagliarani-heston} and
\cite{lorig-pagliarani-pascucci-4} for Markov diffusions to Markov processes with jumps.
\end{abstract}

\noindent \textbf{Keywords}:  multi-dimensional L\'evy-type process with killing, asymptotic
approximation, integro-differential equation

%
%

\section{Introduction}

In a multi-dimensional Markovian setting, the time evolution of a market model is usually
described by the solution $X$ of a L\'evy-It\^o stochastic differential equation (SDE). Such a
model allows for features commonly seen in markets, such as stochastic volatility, jumps, default,
co-integration and correlation.
Many quantities of interest (e.g., option prices, net present values) can be expressed as
expectations of the form $u(t,x):=\Eb [\varphi(X_T) | X_t=x]$.  Under mild conditions, the
function $u(t,x)$ is the unique classical solution of a partial integro-differential equation
(PIDE).  Unfortunately, closed form and even semi-closed form solutions of these PIDEs are
available only in rare cases. As such, it is important to develop general methods for finding
analytical approximations for the solutions of these PIDEs.
\par
Within the mathematical finance literature, a number of different approaches have been taken for
finding approximate transition densities and option prices for markets described by Markov
processes.  Most of these techniques involve expansions that exploit a small parameter or a
limiting case.  For example, \cite{gobet-smart} develop analytical approximations for models with
local volatility and Gaussian jumps in the small diffusion and small jump frequency/size limits
(see also the recent review paper by \cite{bompisgobet2013}).
\cite{deuschel-friz-jacquier-violante-1} obtain densities for diffusion processes in a small noise
limit. \cite{fpss} find option prices for Black-Scholes-like multiscale models where volatility is
driven by two factors, one running on a fast scale, one running on a slow scale.
\cite{lorig-2,lorig-lozano-1} extend these multiscale techniques to more general diffusions and to
the exponential L\'evy setting. 
\par
Recently, \cite{pagliarani2011analytical} introduce a method for finding asymptotic solutions of
parabolic PDEs. The approach, called the \emph{adjoint expansion method}, is extended by
\cite{pascucci,lorig-pagliarani-pascucci-1} to models with jumps and it was further generalized by
\cite{lorig-pagliarani-pascucci-4} to a family of asymptotic expansions for a $d$-dimensional
market described by an It\^o SDE (i.e., a Markov market with no jumps). The method consists of
expanding the pricing PDE in polynomial basis functions, which results in a nested sequence of
Cauchy problems, and deriving analytical solutions for these nested Cauchy problems. In this
paper, we extend the results of
\cite{pascucci,lorig-pagliarani-pascucci-1,lorig-pagliarani-pascucci-4} to the PIDEs that arise
when markets are described by a $d$-dimensional L\'evy-It\^o SDE.  Results presented here also
simplify results from \cite{pascucci,lorig-pagliarani-pascucci-1,lorig-pagliarani-pascucci-4}.
\par
The rest of this paper proceeds as follows.  In Section \ref{sec:model} we present a general
$d$-dimensional market model.  We also describe the kinds of derivative-assets we wish to price,
and we relate the price of such derivative-assets to the solution of a parabolic PIDE.  In Section
\ref{sec:approximating} we introduce the idea of polynomial expansions of the pricing PIDE and in
Section \ref{sec:dyson}, we derive a family of analytical price approximations -- one for each
polynomial expansion of the pricing PIDE.  Lastly, in Section \ref{sec:heston} we provide a
numerical example, illustrating the versatility and accuracy of our methods.

%
%

\section{Market model}
\label{sec:model}
We take, as given, an equivalent martingale measure $\Qb$ defined on a complete filtered probability space $(\Om,\Fc,\{\Fc_t,t\geq0\},\Qb)$. All stochastic processes defined below live on this probability space and all expectations are taken with respect to $\Qb$. The risk-neutral dynamics of our market are described by the following $d$-dimensional Markov L\'evy-type process
\begin{align}
\dd X_t
   &=  \mu(t,X_t) \, \dd t + \sig(t,X_t) \, \dd W_t
                + \int_{\Rb^d} z \, \dd \Nt(t,X_{t-},\dd t,\dd z). \label{eq:dX}
\end{align}
Here $W$ is a standard $m$-dimensional Brownian motion, and $\Nt(\cdot,\cdot,\dd t,\dd z)$, given
by
\begin{align}
\Nt(t,x,\dd t, \dd z)
    &= N(t,x,\dd t, \dd z) - \nu(t,x,\dd z)\dd t , &
(t,x)
    &\in \mathbb{R}^+\times\mathbb{R}^d ,
\end{align}
is a family of compensated Poisson measures on $\Bc(\mathbb{R})\otimes \Bc(\mathbb{R}^d)$.
The drift vector $\mu$ and volatility matrix $\sig$ map
$\mu : \Rb_+ \times \Rb^d \to \Rb^d$ and
$\sig: \Rb_+ \times \Rb^d \to \Rb^{ d \times m}$, respectively.
We assume the L\'evy kernel $\nu$ satisfies
\begin{align}\label{e12}
 \int_{\Rb^d} \min\{|z|,|z|^2\} \, \bar{\nu}(\dd z) &< \infty, &
 \bar{\nu}(\dd z):=\sup_{(t,x)\in\R_{+}\times\R^{d}} \nu(t,x,\dd z),
\end{align}
which is rather standard for L\'evy-type models.
The components of $X$ could represent a number of things such as e.g., economic factors, asset
prices, indices, or functions of these quantities.  We assume a risk-free interest rate of the
form $r(t,X_t)$ where $r : \Rb_+ \times \Rb^d \to \Rb$.  We also introduce a random time $\zeta$,
which is given by
\begin{align}
\zeta
    &=  \inf \big\{ t \geq 0 : \int_0^t \gam(s,X_s) \dd s \geq \Ec \big\} , &
\gam
        &:  \Rb_+ \times \Rb^d \to \Rb_+ ,
\end{align}
with $\Ec$ exponentially distributed and independent of $X$. The random time $\zeta$ could
represent the default time of an asset, the arrival of an economic shock, etc..
\par
Denote by $V$ the no-arbitrage price of a European derivative expiring at time $T$ with payoff
\begin{align}
H(X_T) \, \Ib_{\{\zeta > T\}} + G(X_T) \, \Ib_{\{\zeta \leq T\}}
    &=  \big( H(X_T) - G(X_T) \big) \, \Ib_{\{\zeta > T\}} + G(X_T) .
\end{align}
It is well known (see, for instance, \cite{yorbook}) that
\begin{align}\label{e1}
V_t
    &=  \Eb \[ \ee^{-\int_t^T r(s,X_s)  \dd s} G(X_T) | X_t \] + \\ &\qquad
        \Ib_{\{\zeta>t\}} \Eb \[\ee^{-\int_t^T \( r(s,X_s) + \gam(s,X_s) \) \dd s} \Big(H(X_T)-G(X_T) \Big)  | X_t \] ,  &
t
    &<T . \label{eq:V}
\end{align}
Thus, to value a European-style option, one must compute functions of the form
\begin{align}
u(t,x)
    &:= \Eb \[\ee^{- \int_t^T  \lam(s,X_s)  \dd s}  \phi(X_T) \mid X_t = x \] . \label{eq:v}
\end{align}
Under mild assumptions (see, for instance, \cite{Pascuccibook}), the function $u$, defined by
\eqref{eq:v}, satisfies the Kolmogorov backward equation
\begin{align}
(\d_t + \Ac(t)) u
    &=  0 , &
u(T,x)
    &=  \phi(x), &
x
    &\in \Rb^d,
\label{eq:u.pide}
\end{align}
where the operator $\Ac(t)$ is given explicitly by
\begin{align}
\hspace{-25pt}\Ac(t)
    &=  \int_{\Rb^d} \nu(t,x,\dd z) \( \ee^{\<z,\nabla_x \>} - 1 - \<z , \nabla_x \> \)
            + \frac{1}{2} \sum_{i,j=1}^d  \(\sig \sig^\text{T}\)_{ij}(t,x) \d_{x_i}\d_{x_j}
      + \sum_{i=1}^d \mu_i(t,x) \d_{x_i} -  \lam(t,x)  ,
            \label{eq:A}
\end{align}
with
\begin{align}
\< z , x \>
    &:= \sum_{i=1}^d z_i \, x_i , &
\nabla_x
    &:= (\d_{x_1},\d_{x_2},\cdots,\d_{x_d}) , &
\ee^{\<z,\nabla_x \>} f(x)
    &:= f(x+z) . \label{eq:3defs}
\end{align}
The formal representation of the \emph{shift operator} $\ee^{\<z,\nabla_x \>}$ is motivated by the
fact that its Taylor expansion applied to the function $f(x)$ gives the Taylor expansion of
$f(x+z)$ about the point $x$. As in \cite[Chapter 1]{oksendal2}, we regard the domain of $\Ac(t)$
to be all functions $f:\Rb^d \to \Rb$ such that $\Ac(t)f(x)$ exists and is finite for all $x \in
\Rb^d$.

\begin{remark}[Martingale property]
Let us denote by $X^{(i)}$ the $i$th component of the vector $X$ and assume that
\begin{align}
\int_{|z|\ge 1} \ee^{z_{i}} \, \bar{\nu}(\dd z)< \infty,
\end{align}
for some $i\le d$, with $\bar{\nu}$ as in \eqref{e12}. If $S_t:=\Ib_{\{\zeta>t\}}\ee^{X_t^{(i)}}$
is supposed to be a traded asset then, in order for $S$ to be a martingale, the drift $\mu_i$ must
satisfy
\begin{align}
\mu_i(t,x)
    &=  \gam(t,x) - \int_{\Rb^d} \nu(t,x,\dd z) (\ee^{z_i}-1-z_i)
            - \frac{1}{2} \(\sig \sig^\text{T}\)_{ii}(t,x),
\end{align}
To see this, set $H(x)=\ee^{x_{i}}$, $G(x)=0$ and impose $V_t=S_t$ in \eqref{eq:V}.
\end{remark}

%
%

\section{General expansion basis}
\label{sec:approximating}
Let us start by rewriting the differential operator \eqref{eq:A} in the more compact form
\begin{align}
\Ac(t)
    &:= \int_{\Rb^d} \nu(t,x,\dd z) \( \ee^{ \<z , \nabla_x \>} - 1 - \<z , \nabla_x \> \)
            + \sum_{|\alpha |\leq 2} a_{\alpha}(t,x) D^{\alpha}_{x} , &
t\in\R,\ x\in
    \mathbb{R}^d , \label{operator_A}
\end{align}
where by standard notations
\begin{align}
\alpha
    &=  (\alpha_1,\cdots,\alpha_d)\in \mathbb{N}^{d}_{0}, &
|\alpha|
    &=  \sum_{i=1}^{d}\alpha_i, &
D_{x}^{\alpha}
    =       \partial^{\alpha_1}_{x_1}\cdots \partial^{\alpha_d}_{x_d}.
\end{align}
In this section we introduce a family of expansion schemes for $\Ac(t)$, which we shall use to construct closed-form approximate solutions (one for each family) of \eqref{eq:u.pide}.
\begin{definition}
\label{def:A} For $|\a|\le 2$ and $n\le N\in\mathbb{N}_0$, let $a_{\alpha,n}=a_{\alpha,n}(t,x)$
and $\nu_n=\nu_n(t,x,\dd z)$ be such that the following hold:
\begin{enumerate}
\item[(i)] For any $t \in [0,T]$, $a_{\alpha,n}(t,\cdot)$ are polynomial functions with $a_{\alpha,0}(t,x) \equiv a_{\alpha,0}(t)$, and for any $x\in\Rb^{d}$ the functions $a_{\alpha,n}(\cdot,x)$ belong to
$L^{\infty}([0,T])$.
\item[ii)] 
For any $t\in [0,T]$, $x\in\mathbb{R}^d$, we have
 \begin{align}\label{e34}
  \nu_n(t,x,\dd z)
    &=\sum_{|\beta|\le M_{n}}x^{\b}  \nu_{n,\b}(t,\dd z), & M_n &\in \mathbb{N}_{0} ,
\end{align}
where each $\nu_{n,\b}(t,\dd z)$ satisfies condition \eqref{e12}.  Moreover, $M_{0}=0$, $\nu_{0}\ge0
$ and
\begin{align}\label{e21}
   \int_{|z|\ge 1}\ee^{\lambda |z|}\nu_{0}(t,\dd z)<\infty,\qquad t\in[0,T],
\end{align}
for some positive $\lambda$.
\end{enumerate}
Then we say that $\( \Ac_n(t) \)_{0\le n \le N}$, defined by
\begin{align}
 \Ac_n(t)f(x) &= \sum_{|\beta|\le M_{n}}x^{\b} \int_{\Rb^d} \nu_{n,\beta}(t,\dd z) \left( \ee^{\<z,\nabla_x \>} - 1 - \<z,\nabla_x \>\right)f(x)
            + \sum_{|\alpha |\leq 2}  a_{\alpha,n}(t,x) D^{\alpha}_x f(x) \\
            &\equiv  \int_{\Rb^d} \nu_n(t,x,\dd z) \left( \ee^{\<z,\nabla_x \>} - 1 - \<z,\nabla_x \>\right)f(x)
            + \sum_{|\alpha |\leq 2}  a_{\alpha,n}(t,x) D^{\alpha}_x f(x),      \label{eq:An}
\end{align}
is an \emph{$N$th order polynomial expansion} of $\Ac(t)$.
\end{definition}
\noindent 
Definition \ref{def:A} allows for very general polynomial specifications.  The idea is to choose
an expansion $(\Ac_n(t))$ that closely approximates $\Ac(t)$. The precise sense of this
approximation will depend on the application.  Below, we present three polynomial expansions.  The
first two expansion schemes provide an accurate approximation $\Ac(t)$ in a pointwise local sense,
under the assumption of smooth coefficients. The last expansion scheme approximates $\Ac(t)$ in a
global sense and can be applied even in the case of discontinuous coefficients.
\begin{example}\label{example:Taylor}(Taylor polynomial expansion)\\
Assume the coefficients $a_{\alpha}(t,\cdot)\in C^{N}(\mathbb{R}^d)$ and that the compensator
$\nu$ takes the form
 $$\nu(t,x,\dd z)=h(t,x,z)\bar{\nu}(\dd z)$$
where $h(t,\cdot,z)\in C^{N}(\mathbb{R}^d)$ with $h\geq 0$, and $\bar{\nu}$ is a L\'evy measure.
Then, for any fixed $\bar{x}\in\R^{d}$ and $n\le N$, we define $\nu_n$ and $a_{\alpha,n}$ as the
$n$th order term of the Taylor expansions of $\nu$ and $a_{\alpha}$ respectively in the spatial
variables $x$ around the point $\xbar$.  That is, we set
\begin{align}
\nu_n(t,x,\dd z)
  &=        \sum_{|\b|=n}\frac{D_{x}^{\b}h(t,\bar{x},z)}{\b!}(x-\bar{x})^{\b}\bar{\nu}(\dd z), 
  \\
  a_{\alpha,n}(t,x) &= \sum_{|\b|=n}\frac{D_{x}^{\b}a_{\alpha}(t,\bar{x})}{\b!}(x-\bar{x})^{\b}, 
  &|\a| &\leq 2 ,
\end{align}
where as usual {$\b!=\b_{1}!\cdots\b_{d}!$ and $x^\beta = x_1^{\beta_1} \cdots x_d^{\beta_d}$}.
The expansion proposed in \cite{lorig-pagliarani-pascucci-2} and
\cite{lorig-pagliarani-pascucci-3} is the particular case when $\nu\equiv 0$, whereas the
expansion proposed in \cite{lorig-pagliarani-pascucci-1} and \cite{lorig-pagliarani-pascucci-1.5}
is a particular case when $d=1$.
\end{example}
\begin{example}\label{example:TimeTaylor}(Time-dependent Taylor polynomial expansion)\\
Under the assumptions of Example \ref{example:Taylor},
fix a trajectory $\xb:\Rb_+ \to \Rb^d$.
We then define $\nu_n(t,x,\dd z)$ and $a_{\alpha,n}(t,x)$ as the $n$th order term of the Taylor expansions of $\nu(t,x,\dd z)$ and $a_{\alpha}(t,x)$ respectively around $\xbar(t)$. More precisely, we set
\begin{align}
 \nu_n(t,x,\dd z)
    &=\sum_{|\b|=n}\frac{D_x^{\b}h(t,\bar{x}(t), z)}{\b!}(x-\bar{x}(t))^{\b}\bar{\nu}(\dd z),\\
     a_{\alpha,n}(t,x)
    &=\sum_{|\b|=n}\frac{D_x^{\b}a_{\alpha}(t,\bar{x}(t))}{\b!}(x-\bar{x}(t))^{\b}, &|\alpha| &\leq 2 .
\end{align}
This expansion for the coefficients allows the expansion point $\bar{x}$ of the Taylor
series to evolve in time according to the evolution of the underlying process $X_t$. For instance, one could choose $\bar{x}(t)=\mathbb{E}[X_t]$. In \cite{lorig-pagliarani-pascucci-2} this choice results in a highly accurate approximation for option prices and implied volatility in the \cite{heston} model.
\end{example}

\begin{example}\label{example:Hilbert}(Hermite polynomial expansion)\\
Hermite expansions can be useful when the diffusion coefficients are discontinuous.  A remarkable
example in financial mathematics is given by the Dupire's local volatility formula for models with
jumps (see \cite{frizyor2013}). In some cases, e.g., the well-known Variance-Gamma model, the
fundamental solution (i.e., the transition density of the underlying stochastic model) has
singularities.  In such cases, it is natural to approximate it in some $L^{p}$ norm rather than in
the pointwise sense. For the Hermite expansion centered at $\xb$, one sets
\begin{align}
\nu_{n}(t,x,\dd z)
    &=  \sum_{|\b|=n}
                \< \Hv_\b(\cdot-\bar{x}) , \nu(t,\cdot,\dd z) \>_{\Gam} \Hv_\b(x-\bar{x}),\\
a_{\alpha,n}(t,x)
    &=  \sum_{|\b|=n}
                \< \Hv_\b(\cdot-\bar{x}) , a_{\alpha}(t,\cdot) \>_{\Gam} \Hv_\b(x-\bar{x}), & |\a|
    &\leq 2 ,
\end{align}
where the inner product $\<\cdot,\cdot\>_\Gam$ is an integral over $\Rb^d$ with a Gaussian
weighting centered at $\xb$ and the functions $\Hv_\beta(x) = H_{\beta_1}(x_1) \cdots
H_{\beta_d}(x_d)$ where $H_n$ is the $n$-th one-dimensional Hermite polynomial (properly
normalized so that $\< \Hv_\alpha, \Hv_\beta \>_\Gam = \delta_{\alpha,\beta}$ with
$\delta_{\alpha,\beta}$ being the Kronecker's delta function).
\end{example}

%
%

\section{Formal solution via Dyson series}
\label{sec:dyson}
In this section we present a heuristic argument to pass from an expansion of the operator $\Ac(t)$
in \eqref{eq:A} to an expansion for $u$, the solution of problem \eqref{eq:u.pide}. The following
argument is not intended to be rigorous.  Rather, the computations that follow provide motivation for the price expansion given in Definition \ref{def:ub.N}.
Throughout this section, we will generally omit $x$-dependence, except where it is needed for clarity.  To begin, we presume that the operator $\Ac(t)$ can be formally written as
\begin{align}
\Ac(t)
    &=  \Ac_0(t) + \Bc(t) , &
\Bc(t)
    &=  \sum_{n=1}^\infty \Ac_n(t). \label{eq:A.expand} 
\end{align}
We insert expansion \eqref{eq:A.expand} for $\Ac(t)$ into Cauchy problem \eqref{eq:u.pide} and find
\begin{align}
( \d_t + \Ac_0(t) ) u(t)
    &=  - \Bc(t) u(t) , &
u(T)
    &=  \varphi .
\end{align}
Note that, by construction, $\Ac_0(t)$ is the generator of an additive process. 
Therefore, by Duhamel's principle, we have
\begin{align}
u(t)
    &=  \Pc_0(t,T) \varphi + \int_t^T \dd t_1 \, \Pc_0(t,t_1) \Bc(t_1) u(t_1) , \label{eq:u.duhemel}
\end{align}
where $\Pc_0(t,T)$ is the semigroup of operators generated by $\Ac_0(t)$. 
Inserting expression \eqref{eq:u.duhemel} for $u$ into the right-hand side of \eqref{eq:u.duhemel} and iterating we obtain
\begin{align}
u(t)
    &=  \Pc_0(t,T) \varphi
            + \int_{t}^T \dd t_1 \, \Pc_0(t,t_1) \Bc(t_1) \Pc_0(t_1,T) \varphi \\ & \qquad
            + \int_{t}^T \dd t_1 \int_{t_1}^T \dd t_2 \, \Pc_0(t,t_1) \Bc(t_1)
                \Pc_0(t_1,t_2) \Bc(t_2) u(t_2) \\
    &=  \cdots \\
    &=  \Pc_0(t,T) \varphi + \sum_{k=1}^\infty
            \int_{t}^T \dd t_1 \int_{t_1}^T \dd t_2 \cdots \int_{t_{k-1}}^T \dd t_k
            \\ & \qquad
            \Pc_0(t,t_1) \Bc(t_1)
            \Pc_0(t_1,t_2) \Bc(t_2) \cdots
            \Pc_0(t_{k-1},t_k) \Bc(t_k)
            \Pc_0(t_k,T) \phi
            \label{eq:dyson} \\
    &=  \Pc_0(t,T) \varphi + \sum_{n=1}^\infty \sum_{k=1}^n
            \int_{t}^T \dd t_1 \int_{t_1}^T \dd t_2 \cdots \int_{t_{k-1}}^T \dd t_k
            \\ & \qquad \sum_{i \in I_{n,k}}
            \Pc_0(t,t_1) \Ac_{i_1}(t_1)
            \Pc_0(t_1,t_2) \Ac_{i_2}(t_2) \cdots
            \Pc_0(t_{k-1},t_k) \Ac_{i_k}(t_k)
            \Pc_0(t_k,T)\phi ,
            \label{eq:u.dyson} \\
I_{n,k}
    &= \{ i = (i_1, i_2, \cdots , i_k ) \in \mathbb{N}^k \mid i_1 + i_2 + \cdots + i_k = n \}.
            \label{eq:Ink}
\end{align}
The second-to-last equality \eqref{eq:dyson} is known as the {\it Dyson series expansion} of $u$
(see, for instance, Section 5.7 of \cite{sakurai1994modern} or Chapter IX.2.6 of \cite{kato}). To
obtain \eqref{eq:u.dyson} from \eqref{eq:dyson} we have used \eqref{eq:A.expand} to replace $\Bc(t)$ by the infinite sum $\sum_{n=1}^\infty \Ac_n(t)$, and we have partitioned on the sum of the subscripts of the $(\Ac_{i_k})$.
Expansion \eqref{eq:u.dyson} motivates the following definition.
\begin{definition}
\label{def:ub.N} For a fixed $N$th order polynomial expansion $(\Ac_n(t))_{0\le n\le N}$ satisfying Definition \ref{def:A}, we define $\ub_N$, the \emph{$N$th order price approximation} of $u$,
as
\begin{align}
\ub_N
    &:=  \sum_{n=0}^N u_n ,  \label{eq:ubar.N} 
\end{align}
where
\begin{align}
 u_0(t)
    &:= \Pc_0(t,T) \varphi ,\\
 u_n(t)
    &:= \sum_{k=1}^n
            \int_{t}^T \dd t_1 \int_{t_1}^T \dd t_2 \cdots \int_{t_{k-1}}^T \dd t_k
            \\ & \qquad \sum_{i \in I_{n,k}}
            \Pc_0(t,t_1) \Ac_{i_1}(t_1)
            \Pc_0(t_1,t_2) \Ac_{i_2}(t_2) \cdots
            \Pc_0(t_{k-1},t_k) \Ac_{i_k}(t_k)
            \Pc_0(t_k,T) \phi,\qquad n\ge1. \label{eq:un.def}
\end{align}
Here, $\Pc_0(t,T)$ is the semigroup generated by $\Ac_0(t)$ and $I_{n,k}$ is as given in \eqref{eq:Ink}.
\end{definition}
\noindent In Sections \ref{sec:u.0} and \ref{sec:u.n} we will provide explicit expressions
for $u_0$ and $\left(u_n\right)_{n \geq 1}$ respectively.

\subsection{Expression for $u_0$}
\label{sec:u.0} In what follows, it will be helpful to recall the definition of the Fourier and
inverse Fourier transforms.  For any function $\phi$ in the Schwartz class, we define
\begin{align}
\text{Fourier transform:}&& \Fc[\phi](\xi) = \phih(\xi)
    &=  \int_{\Rb^d} \dd x \, \phi(x)  \ee^{ \ii \< \xi , x\> } , \label{eq:ft} \\
\text{Inverse transform:}&& \Fc^{-1}[\hat{\phi}](x) = \phi(x)
    &=  \frac{1}{(2 \pi )^d}\int_{\Rb^d} \dd \xi \, \phih(\xi) \ee^{- \ii \< \xi , x\> } .
            \label{eq:ift}
\end{align}
Recall that by construction $M_{0}=0$ (cf. Definition \ref{def:A}) and therefore the operator
$\Ac_0(t)$ has time-dependent
coefficients which are independent of $x$. Then the action of the semigroup of operators $\Pc_0(t,T)$ of $\Ac_0(t)$ 
is well-known:
\begin{align}
  u_0(t):=  \Pc_0(t,T) \phi
    &=\frac{1}{(2 \pi )^d}  \int_{\Rb^d}
            \hat{P}_0(t,x,T,\xi)\phih(-\xi)\, \dd \xi \, \label{eq:u0}
\end{align}
where 
\begin{align}\label{eq:P0}
  \hat{P}_0(t,x,T,\xi)
        &:= \ee^{\ii \< \xi, x \> + \Phi_0(t,T,\xi)}
\end{align}
with
\begin{align}
\Phi_0(t,T,\xi)
    &= \sum_{|\alpha |\leq 2} (\ii \xi)^\alpha \int_{t}^{T} \dd s \, a_{\alpha,0}(s) + \Psi_{0}(t,T,\xi),
            \label{eq:Phi0}
\end{align}
and
\begin{align}\label{eq:Psi0}
\Psi_{0}(t,T,\xi)=  \int_{t}^{T} \int_{\Rb^d}
             \( \ee^{\ii \< \xi, z \>} - 1 - \ii \< \xi, z \> \)\nu_0(s,\dd z)\dd s.
\end{align}
\begin{remark}\label{r5}
We introduce $\hat{P}$ and $\ee_\xi$, the {\it characteristic function} and \emph{oscillating
exponential}, respectively
\begin{align}\label{osci}
\hat{P}(t,x,T,\xi)
    &:= \Eb \left[ \ee^{\int_t^T a_{0,0}(s,X_s)\dd s} \ee^{\ii \<\xi, X_T\>} | X_t = x\right] , &
\ee_\xi(x)
    &=  \ee^{ \ii \< \xi , x\>} ,
\end{align}
where $a_{0,0}$ is short-hand for $a_{(0,0,\cdots,0),0}$.  From \eqref{eq:v} we observe that $\hat{P}(t,x,T,\xi)$ is obtained as the special case $\phi=\ee_\xi$.  We note that $\hat{P}_0(t,x,T,\xi)$ in \eqref{eq:P0} represents the $0$th order approximation of $\hat{P}(t,x,T,\xi)$.  More generally, we denote by $\hat{P}_n(t,x,T,\xi)$ the $n$th order approximation of $\hat{P}(t,x,T,\xi)$, obtained by setting $\phi=\ee_\xi$ in \eqref{eq:un.def}.
\end{remark}
%


\subsection{Expression for $u_n$}
\label{sec:u.n} Remarkably, as the following proposition shows, every $u_n(t)$ can be expressed as
a pseudo-differential operator $\Lc_n(t,T)$ acting on $u_0(t)$.
\begin{proposition}\label{thm:dyson}
Assume that $\phi$ belongs to the Schwartz class, and that $\Phi_{0}$ in \eqref{eq:Phi0} is a smooth function of the variable $\xi$. Then the function $u_n$ defined in \eqref{eq:un.def} is given explicitly by
\begin{align}
u_n(t)
    &=  \Lc_n(t,T) u_0(t) , \label{eq:un} &
\end{align}
where $u_0$ is given by \eqref{eq:u0} and
\begin{align}
\Lc_n(t,T)
    &=  \sum_{k=1}^n
            \int_{t}^T \dd t_1 \int_{t_1}^T \dd t_2 \cdots \int_{t_{k-1}}^T \dd t_k
            \sum_{i \in I_{n,k}}
            \Gc_{i_1}(t,t_1)
            \Gc_{i_2}(t,t_2) \cdots
            \Gc_{i_k}(t,t_k) , \label{eq:Ln}
\end{align}
with $I_{n,k}$ as defined in \eqref{eq:Ink} and
\begin{align}
\Gc_j(t,t_k)
    &:= \Ac_j(t_k,\Mc(t,t_k)) \\
        &=  \int_{\Rb^d}\nu_j(t_k,\Mc(t,t_k),\dd z)
            \left(\ee^{ \<z, \nabla_x \>} - 1 - \<z, \nabla_x \>\right)
            + \sum_{|\alpha |\leq 2}  a_{\alpha,j}(t_k,\Mc(t,t_k)) D_x^\alpha ,
    \label{eq:G.def} \\
\Mc(t,t_k)
    &:= x + \int_{\Rb^d} \int_{t}^{t_k} z \( \ee^{\< z,\nabla_x \>} - 1 \)
            \nu_0(s,\dd z)\dd s
            + \int_{t}^{t_k}  m(s)\dd s + \int_{t}^{t_k}  C(s) \nabla_x \dd s  ,
                        \label{eq:M.def} \\
m(s)
        &=  \begin{pmatrix}
                a_{(1,0,\dots,0),0}(s) & a_{(0,1,\dots,0),0}(s) & \ldots &  a_{(0,0,\dots,1),0}(s)
                \end{pmatrix} ,\label{eq:m} \\
C(s)
        &= \begin{pmatrix}
                2a_{(2,0,\dots,0),0}(s) & a_{(1,1,\dots,0),0}(s) & \ldots &  a_{(0,0,\dots,1),0}(s) \\
                a_{(1,1,\dots,0),0}(s) & 2a_{(0,2,\dots,0),0}(s) & \ldots &  a_{(0,1,\dots,1),0}(s) \\
                \vdots & \vdots & \ddots & \vdots \\
                a_{(1,0,\dots,1),0}(s) & a_{(0,1,\dots,1),0}(s) & \ldots &  2 a_{(0,0,\dots,2),0}(s) \\
                \end{pmatrix}\label{eq:C} .
\end{align}
Moreover, the components of $\Mc(t,t_k)$ commute.  Therefore the operators $(\Gc_j(t,t_k))$, which are polynomials in $\Mc(t,t_k)$ by construction, are well defined.
\end{proposition}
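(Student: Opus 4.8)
The plan is to reduce the identity $u_n(t)=\Lc_n(t,T)u_0(t)$ to a single conjugation relation between the propagator $\Pc_0$ and the operators $\Ac_j$, and then to collapse the Dyson chain by telescoping. Concretely, I will establish the operator identity
\begin{align}
\Pc_0(t,s)\,\Ac_j(s)=\Gc_j(t,s)\,\Pc_0(t,s),\qquad t\le s\le T,
\end{align}
with $\Gc_j(t,s)=\Ac_j(s,\Mc(t,s))$ exactly as in \eqref{eq:G.def}. Granting this, each summand of \eqref{eq:un.def} telescopes: pushing every $\Pc_0$ to the right past its neighbouring $\Ac_{i}$ by the conjugation relation and merging adjacent semigroups through the propagator law $\Pc_0(t,t_1)\Pc_0(t_1,t_2)=\Pc_0(t,t_2)$, one obtains
\begin{align}
\Pc_0(t,t_1)\Ac_{i_1}(t_1)\cdots\Pc_0(t_{k-1},t_k)\Ac_{i_k}(t_k)\Pc_0(t_k,T)
&=\Gc_{i_1}(t,t_1)\cdots\Gc_{i_k}(t,t_k)\,\Pc_0(t,T).
\end{align}
Applying this to $\phi$ and recognising $\Pc_0(t,T)\phi=u_0(t)$, then summing over $k$ and $i\in I_{n,k}$ and integrating in $(t_1,\dots,t_k)$, reproduces \eqref{eq:Ln} and gives $u_n(t)=\Lc_n(t,T)u_0(t)$.

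The heart of the matter is thus the conjugation relation. Because $M_0=0$, the coefficients of $\Ac_0(s)$ are independent of $x$, so $\Ac_0(s)$ --- and hence $\Pc_0(t,s)$ --- is translation invariant and commutes with every $\d_{x_i}$ and with every shift $\ee^{\<z,\nabla_x\>}$. Writing $\Ac_j(s)=\sum_{|\beta|\le M_j}x^{\beta}\Bc_\beta(s)$, where each $\Bc_\beta(s)$ is a translation-invariant combination of the $D_x^\alpha$ and the shifts $\ee^{\<z,\nabla_x\>}$, it suffices to conjugate the multiplication operators $x_i$. Setting $\Mc(t,s):=\Pc_0(t,s)\,x\,\Pc_0(t,s)^{-1}$ componentwise, so that $\Pc_0(t,s)x_i=\Mc_i(t,s)\Pc_0(t,s)$, and using that the $\Mc_i$ commute --- so that $\Pc_0(t,s)x^{\beta}=\Mc(t,s)^{\beta}\Pc_0(t,s)$ is unambiguous --- yields
\begin{align}
\Pc_0(t,s)\Ac_j(s)=\sum_{|\beta|\le M_j}\Mc(t,s)^{\beta}\Bc_\beta(s)\,\Pc_0(t,s)=\Ac_j(s,\Mc(t,s))\,\Pc_0(t,s).
\end{align}
The commutativity used here is exactly the ``Moreover'' assertion: writing $\Mc_i=x_i+R_i$ with $R_i$ translation invariant, the Jacobi identity together with $[x_i,x_j]=0$ forces $[x_i,R_j]+[R_i,x_j]=0$, whence $[\Mc_i,\Mc_j]=0$ and the polynomials $\Mc(t,s)^\beta$ are well defined.

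It remains to identify $\Mc(t,s)$ with \eqref{eq:M.def}. Differentiating $\Mc(t,s)=\Pc_0(t,s)x\Pc_0(t,s)^{-1}$ in $s$ and using $\d_s\Pc_0(t,s)=\Pc_0(t,s)\Ac_0(s)$ gives the Heisenberg equation $\d_s\Mc_i(t,s)=\Pc_0(t,s)[\Ac_0(s),x_i]\Pc_0(t,s)^{-1}$; as $[\Ac_0(s),x_i]$ is again translation invariant it commutes with $\Pc_0(t,s)$, so $\d_s\Mc_i(t,s)=[\Ac_0(s),x_i]$ with $\Mc_i(t,t)=x_i$. Using $[D_x^\alpha,x_i]=\alpha_i D_x^{\alpha-e_i}$ and $[\ee^{\<z,\nabla_x\>},x_i]=z_i\ee^{\<z,\nabla_x\>}$ in \eqref{eq:An} gives
\begin{align}
[\Ac_0(s),x_i]=\int_{\Rb^d}z_i\(\ee^{\<z,\nabla_x\>}-1\)\nu_0(s,\dd z)+m_i(s)+\(C(s)\nabla_x\)_i,
\end{align}
with $m,C$ as in \eqref{eq:m}--\eqref{eq:C}, and integrating from $t$ recovers \eqref{eq:M.def}. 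The main obstacle is analytic rather than algebraic: the steps above manipulate the unbounded operators $\Pc_0(t,s)^{-1}$ and the polynomial-coefficient operators $\Gc_j(t,s)$, so they must be justified on a dense class. This is where the hypotheses enter --- for $\phi$ Schwartz and $\Phi_0$ smooth in $\xi$, the representation \eqref{eq:u0}--\eqref{eq:P0} exhibits $u_0(t)$ as a superposition of oscillating exponentials $\ee_\xi$ on which $\Pc_0(t,s)$ acts by multiplication by $\ee^{\Phi_0(t,s,\xi)}$, the conjugation $x_i\mapsto\Mc_i$ corresponds on the Fourier side to $-\ii\d_{\xi_i}$, and every identity above becomes differentiation under the integral sign, whose validity follows from the decay of $\hat\phi$ and the smoothness of $\Phi_0$. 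Making these interchanges precise is where the real work lies.
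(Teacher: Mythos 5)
Your proof has the same skeleton as the paper's: everything is reduced to the single conjugation identity $\Pc_0(t,t_k)\Ac_j(t_k)=\Gc_j(t,t_k)\Pc_0(t,t_k)$, i.e.\ the paper's \eqref{eq:PA=GP}, after which the Dyson chain collapses via the propagator law exactly as in \eqref{eq:u.dyson2}. The genuine difference lies in how the conjugation identity and the formula \eqref{eq:M.def} are obtained. The paper takes \eqref{eq:M.def} as given and \emph{verifies} it spectrally: it works on the oscillating exponentials, using $\Pc_0(t,t_k)\ee_\xi=\ee^{\Phi_0(t,t_k,\xi)}\ee_\xi$ and $x^\beta\ee_\xi=(-\ii\nabla_\xi)^\beta\ee_\xi$, and checks $(-\ii\nabla_\xi)^\beta\ee^{\Phi_0}\ee_\xi=(\Mc(t,t_k))^\beta\ee^{\Phi_0}\ee_\xi$, never inverting $\Pc_0$. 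You instead \emph{derive} $\Mc$ in the Heisenberg picture, $\Mc(t,s)=\Pc_0(t,s)\,x\,\Pc_0(t,s)^{-1}$, and integrate the commutator ODE $\d_s\Mc_i(t,s)=[\Ac_0(s),x_i]$. I checked your commutators: $[\ee^{\<z,\nabla_x\>}-1-\<z,\nabla_x\>,x_i]=z_i(\ee^{\<z,\nabla_x\>}-1)$ and $\sum_{|\alpha|\le2}a_{\alpha,0}(s)[D_x^\alpha,x_i]=m_i(s)+(C(s)\nabla_x)_i$, so the ODE reproduces \eqref{eq:M.def}--\eqref{eq:C} exactly, and the decomposition $\Ac_j(s)=\sum_{|\beta|\le M_j}x^\beta\Bc_\beta(s)$ with translation-invariant $\Bc_\beta$ commuting with $\Pc_0$ respects the operator ordering in \eqref{eq:G.def}. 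Your route has the virtue of explaining where \eqref{eq:M.def} comes from rather than verifying a formula produced out of a hat; its cost is the formal inverse $\Pc_0(t,s)^{-1}$ (backward L\'evy evolution is not an honest operator), which you rightly flag and which is admissible at the paper's own level of rigor once all identities are read through the Fourier representation on Schwartz data --- indeed, restating your relation as $\Pc_0(t,s)x_i\,\ee_\xi=\Mc_i(t,s)\Pc_0(t,s)\ee_\xi$ removes the inverse entirely and lands you on the paper's computation.

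One justification, however, is wrong as stated: your argument for the commutativity of the components of $\Mc$. Writing $\Mc_i=x_i+R_i$ with $R_i$ translation invariant, the Jacobi identity plus $[x_i,x_j]=0$ does \emph{not} force $[x_i,R_j]+[R_i,x_j]=0$: take $d=2$, $R_1=\d_{x_2}$, $R_2=0$; these are translation invariant, yet $[x_1,R_2]+[R_1,x_2]=[\d_{x_2},x_2]=1$, so $[\Mc_1,\Mc_2]=1\neq0$. Commutativity holds here for a structural reason your general algebraic claim misses, and either of two one-line repairs works: (i) within your own framework, conjugation preserves commutators, so $[\Mc_i,\Mc_j]=\Pc_0(t,s)[x_i,x_j]\Pc_0(t,s)^{-1}=0$; or (ii) as in the paper, the symbol of $R_i$ is $-\ii\d_{\xi_i}\Phi_0(t,s,\xi)$, a gradient field, so the obstruction is $\d_{\xi_i}\d_{\xi_j}\Phi_0-\d_{\xi_j}\d_{\xi_i}\Phi_0=0$ by the assumed smoothness of $\Phi_0$ --- precisely the commutativity of $\d_{\xi_i}$ and $\d_{\xi_j}$ invoked after \eqref{e50}, and precisely where that hypothesis enters. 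With that substitution your argument is complete and correct.
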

\begin{proof}
The proof consists in showing that the operator
$\Gc_j(t,t_k)$ in \eqref{eq:G.def} satisfies
\begin{align}
\Pc_0(t,t_k) \Ac_{j}(t_k)
    &=  \Gc_j(t,t_k) \Pc_0(t,t_k) . \label{eq:PA=GP}
\end{align}
Assuming \eqref{eq:PA=GP} holds, we can use the fact that $\Pc_0(t_k,t_{k+1})$ is a semigroup
\begin{align}
\Pc_0(t,T)
    &=  \Pc_0(t,t_1) \Pc_0(t_1,t_2) \cdots \Pc_0(t_{k-1},t_k) \Pc_0(t_k,T) , &
t
    &\leq t_1 \leq \ldots \leq t_k \leq T ,
\end{align}
and we can re-write \eqref{eq:un.def} as
\begin{align}
u_n(t)
    &=  \sum_{k=1}^n
            \int_{t}^T \dd t_1 \int_{t_1}^T \dd t_2 \cdots \int_{t_{k-1}}^T \dd t_k
            \sum_{i \in I_{n,k}}
            \Gc_{i_1}(t,t_1)
            \Gc_{i_2}(t,t_2) \cdots
            \Gc_{i_k}(t,t_k)
            \Pc_0(t,T) \phi , \label{eq:u.dyson2}
\end{align}
from which \eqref{eq:un}-\eqref{eq:Ln} follows directly.  Thus, we only need to show that
$\Gc_j(t,t_k)$ satisfies \eqref{eq:PA=GP}. 
It is sufficient to investigate how the operator $\Pc_0(t,t_k) \Ac_{j}(t_k)$ acts on the
oscillating exponential in \eqref{osci}. First, we note that
\begin{align}
\Pc_0(t,t_k) \ee_\xi(x)
    &=  \ee^{\Phi_0(t,t_k,\xi)} \ee_\xi(x) ,          \label{eq:P.exp}
\end{align}
where $\Phi_0(t,t_k,\xi)$, as given in \eqref{eq:Phi0}, is a smooth function by condition
\eqref{e21}. Next, we observe that the operator $\Mc(t,t_k)$ in \eqref{eq:M.def} can be written
\begin{align}
\Mc(t,t_k)
    &=  M(t,t_k, -\ii \nabla_x ) , &
M(t,t_k, \xi )
    &=  - \ii \nabla_\xi \( \Phi_0(t,t_k,\xi) + \ii \< \xi, x \> \) . \label{eq:M}
\end{align}
Denote by $\Mc_j$ and $M_j$ the $j$th component of $\Mc$ and $M$ respectively.  Then, using
\eqref{eq:M} 
we have
\begin{align}
(-\ii \d_{\xi_i})(-\ii \d_{\xi_j}) \ee^{\Phi_0(t,t_k,\xi)} \ee_\xi(x)
    &=  (-\ii \d_{\xi_i}) M_j(t,t_k,\xi )
            \ee^{\Phi_0(t,t_k,\xi)} \ee_\xi(x) \\
    &=  \Mc_j(t,t_k) (-\ii \d_{\xi_i})
            \ee^{\Phi_0(t,t_k,\xi)} \ee_\xi(x) \\
    &=  \Mc_j(t,t_k) M_i(t,t_k,\xi )
            \ee^{\Phi_0(t,t_k,\xi)} \ee_\xi(x) \\
    &=   \Mc_j(t,t_k)\Mc_i(t,t_k) \ee^{\Phi_0(t,t_k,\xi)} \ee_\xi(x) .\label{e50}
\end{align}
More generally 
for any multi-index $\beta$ we have
\begin{align}
(-\ii \nabla_\xi)^\beta \ee^{\Phi_0(t,t_k,\xi)} \ee_\xi(x)
    &=  (\Mc(t,t_k))^\beta \ee^{\Phi_0(t,t_k,\xi)} \ee_\xi(x) . \label{eq:a.M}
\end{align}
From \eqref{e50} we deduce that operators $\Mc_i$ and $\Mc_j$ commute when applied to
$\ee^{\Phi_0(t,t_k,\xi)} \ee_\xi(x)$, because so do $\d_{\xi_i}$ and $\d_{\xi_j}$.
Consequently, $\Mc_i$ and $\Mc_j$ also commute when applied to $\ee_\xi(x)$ or any
function that admits a representation as a Fourier transform.  To see this observe that
\begin{align}
\Mc_j(t,t_k)\Mc_i(t,t_k) \ee^{\Phi_0(t,t_k,\xi)} \ee_\xi(x)
    &=  \Mc_i(t,t_k)\Mc_j(t,t_k) \ee^{\Phi_0(t,t_k,\xi)} \ee_\xi(x) .
\end{align}
Therefore, since $\Mc_j(t,t_k)$ acts on $x$ and not $\xi$ we have
\begin{align}
\Mc_j(t,t_k)\Mc_i(t,t_k)\ee_\xi(x)
    &=  \Mc_i(t,t_k)\Mc_j(t,t_k) \ee_\xi(x) .
\end{align}
Finally, we compute
\begin{align}
\Pc_0(t,t_k) \Ac_{j}(t_k) \ee_\xi(x)
    &=  \Pc_0(t,t_k) \int_{\Rb^d} \nu_j(t_k,x,\dd z)
            (\ee^{\<z,\nabla_x \>} - 1 - \<z,\nabla_x \>) \ee_\xi(x)
            \\ & \qquad
            + \sum_{|\alpha |\leq 2}  \Pc_0(t,t_k)
            a_{\alpha,j}(t_k,x) D^{\alpha}_{x} \ee_\xi(x) &
    &\text{(by \eqref{eq:An})} \\
    &=  \Pc_0(t,t_k) \int_{\Rb^d} (\ee^{\ii \<z, \xi \>} - 1 - \ii \<z, \xi \>)
            \nu_j(t_k,x,\dd z) \ee_\xi(x)
            \\ & \qquad
            + \sum_{|\alpha |\leq 2}  (\ii \xi)^{\alpha} \Pc_0(t,t_k)
            a_{\alpha,j}(t_k,x) \ee_\xi(x) \\
    &=  \int_{\Rb^d} (\ee^{\ii \<z, \xi \>} - 1 - \ii \<z, \xi \>)
            \nu_j(t_k,-\ii \nabla_\xi,\dd z) \Pc_0(t,t_k) \ee_\xi(x)
            \\ & \qquad
            + \sum_{|\alpha |\leq 2}  (\ii \xi)^{\alpha} a_{\alpha,j}(t_k,-\ii \nabla_\xi)
            \Pc_0(t,t_k) \ee_\xi(x) \\
    &=  \int_{\Rb^d} (\ee^{\ii \<z, \xi \>} - 1 - \ii \<z, \xi \>)
            \nu_j(t_k,-\ii \nabla_\xi,\dd z) \ee^{\Phi_0(t,t_k,\xi)} \ee_\xi(x)
            \\ & \qquad
            + \sum_{|\alpha |\leq 2}  (\ii \xi)^{\alpha} a_{\alpha,j}(t_k,-\ii \nabla_\xi)
            \ee^{\Phi_0(t,t_k,\xi)} \ee_\xi(x) &
    &\text{(by \eqref{eq:P.exp})} \\
    &=  \int_{\Rb^d} (\ee^{\ii \<z, \xi \>} - 1 - \ii \<z, \xi \>)
            \nu_j(t_k,\Mc(t,t_k),\dd z) \ee^{\Phi_0(t,t_k,\xi)} \ee_\xi(x)
            \\ & \qquad
            + \sum_{|\alpha |\leq 2}  (\ii \xi)^{\alpha} a_{\alpha,j}(t_k,\Mc(t,t_k))
            \ee^{\Phi_0(t,t_k,\xi)} \ee_\xi(x) &
    &\text{(by \eqref{eq:a.M})} \\
    &=  \int_{\Rb^d}\nu_j(t_k,\Mc(t,t_k),\dd z)
            (\ee^{ \<z, \nabla_x \>} - 1 - \<z, \nabla_x \>)
            \ee^{\Phi_0(t,t_k,\xi)} \ee_\xi(x)
            \\ & \qquad
            + \sum_{|\alpha |\leq 2}  a_{\alpha,j}(t_k,\Mc(t,t_k)) D_x^\alpha
            \ee^{\Phi_0(t,t_k,\xi)} \ee_\xi(x)    \\
    &=  \int_{\Rb^d}\nu_j(t_k,\Mc(t,t_k),\dd z)
            (\ee^{ \<z, \nabla_x \>} - 1 - \<z, \nabla_x \>)
            \Pc_0(t,t_k) \ee_\xi(x)
            \\ & \qquad
            + \sum_{|\alpha |\leq 2}  a_{\alpha,j}(t_k,\Mc(t,t_k)) D_x^\alpha
            \Pc_0(t,t_k) \ee_\xi(x)   &
    &\text{(by \eqref{eq:P.exp})} \\
    &=  \Gc_j(t,t_k) \Pc_0(t,t_k) \ee_\xi(x) , &
    &\text{(by \eqref{eq:G.def})}
\end{align}
which concludes the proof.
\end{proof}

\begin{remark}
 Error bounds for the Taylor approximation $\ub_N$ in the scalar case $d=1$ can be found in
 \cite{lorig-pagliarani-pascucci-1,lorig-pagliarani-pascucci-1.5}.
\end{remark}

\subsection{Fourier representation for $u_n$}
Using \eqref{eq:u0}, \eqref{eq:P0} and \eqref{eq:un} we have
\begin{align}
u_n(t,x) = \Lc_n(t,T) u_0(t,x)
    &=  \frac{1}{(2\pi)^d} \int_{\Rb^d} \ee^{\Phi_0(t,T,\xi)}\(\Lc_n(t,T)\ee^{\ii \<\xi,x\>} \)
            \hat{\phi}(-\xi) \dd \xi.
\end{align}
The term in parenthesis $\Lc_n(t,T) \ee^{\ii \<\xi,x\>}$ can be computed explicitly.  However,
$\Lc_n(t,T)$ is, in general, an \emph{integro-differential} operator (when $X$ is a diffusion
$\Lc_n(t,T)$ is simply a differential operator).  Thus, for models with jumps, computing
$\Lc_n(t,T)\ee^{\ii \<\xi,x\>}$ is a challenge.  Remarkably, we will show that there exists a
first order \emph{differential} operator $\hat{\Lc}^{\xi}_n(t,T)$ such that
%
\begin{align}\label{e33}
  \Lc^{x}_{n}(t,T)\ee^{\ii \<\xi, x \>}=\hat{\Lc}^{\xi}_{n}(t,T)\ee^{\ii \<\xi, x \>} ,
\end{align}
where, for clarity, we have explicitly indicated using superscripts that $\Lc^{x}_{n}(t,T)$ acts
on $x$ and $\hat{\Lc}^{\xi}_{n}(t,T)$ acts on $\xi$.  With a slight abuse of terminology, we call
$\hat{\Lc}^{\xi}_{n}$ the \emph{symbol}\footnote{The operator $\hat{\Lc}^{\xi}_{n}$ is not a
function as in the classical theory of pseudo-differential calculus.  However $\ee^{-\ii
\<\xi,x\>} \hat{\Lc}^{\xi}_{n} \ee^{\ii \<\xi,x\>}$ is the symbol of $\Lc_n^x(t,T)$.} of the
operator $\Lc_{n}^x(t,T)$ in \eqref{eq:Ln}.

Let us consider the operator $\Mc^{x}(t,t_k)\equiv \Mc(t,t_k)$ in \eqref{eq:M.def} and denote by
$\Mc^{x}_{i}(t,t_k)$ its $i$th component. The symbol $\widehat{\Mc}_{i}^{\xi}(t,t_k)$ of
$\Mc^{x}_{i}(t,t_k)$ is defined analogously to \eqref{e33}, that is
\begin{align}\label{e35}
  \Mc^{x}_{i}(t,t_k)\ee^{\ii \<\xi, x \>}=\widehat{\Mc}_{i}^{\xi}(t,t_k)\ee^{\ii \<\xi, x \>}.
\end{align}
Explicitly, we have
\begin{align}
 \widehat{\Mc}_{i}^{\xi}(t,t_k)= F_{i}(\xi,t,t_k)-\ii \partial_{\xi_{i}},\qquad i=1,\dots,d,
\end{align}
where the function $F$ is defined as
\begin{align}
F_{i}(\xi,t,t_k)=\int_{\Rb^d} \int_{t}^{t_k} z_{i} \( \ee^{\ii\< z,\xi \>} - 1 \)
            \nu_0(s,\dd z)\dd s
            + \int_{t}^{t_k}  m_{i}(s)\dd s + \ii\int_{t}^{t_k}  \left(C(s)\xi\right)_{i} \dd s.
\end{align}
We note that, while $\Mc^{x}$ is a first order \emph{integro-differential} operator, its symbol
$\widehat{\Mc}^{\xi}$ is a first order \emph{differential} operator. For this reason, it is more
convenient to use the symbol $\widehat{\Mc}^{\xi}$ instead of the operator $\Mc^{x}$. Note also
that
\begin{align}
\Mc^{x}_{i}(t,t_k)\Mc^{x}_{j}(t,t_k)\ee^{\ii \<\xi, x \>}=\Mc^{x}_{i}(t,t_k)\widehat{\Mc}_{j}^{\xi}(t,t_k)\ee^{\ii \<\xi, x \>}
  =\widehat{\Mc}_{j}^{\xi}(t,t_k)\Mc^{x}_{i}(t,t_k)\ee^{\ii \<\xi, x \>}=\widehat{\Mc}_{j}^{\xi}(t,t_k)\widehat{\Mc}_{i}^{\xi}(t,t_k)\ee^{\ii \<\xi, x \>}.
\end{align}
Since $\Mc^{x}_{i}$ and $\Mc^{x}_{j}$ commute when applied to a function that admits a Fourier
representation, then $\widehat{\Mc}_{j}^{\xi}$ and $\widehat{\Mc}_{i}^{\xi}$ also commute when
applied to such functions. In particular, the operator
$\left(\widehat{\Mc}^{\xi}(t,t_{k})\right)^{\beta}$, for $\beta\in\mathbb{N}^{d}_{0}$, is well
defined and we have
\begin{align}\label{e40}
\left(\widehat{\Mc}^{\xi}(t,t_{k})\right)^{\beta}\ee^{\ii \<\xi, x \>}=
\left(\Mc(t,t_k)\right)^{\beta}\ee^{\ii \<\xi, x \>}.
\end{align}
From identity \eqref{e40} we obtain directly the expression of the symbol of $\Gc_{j}$ in \eqref{eq:G.def}.  Indeed, recalling the expression \eqref{e34} of $\nu_{j}$ we have
\begin{align}
\hat{\Gc}^{\xi}_j(t,t_k)
        &= \sum_{|\beta|\le M_{j}} \int_{\Rb^d}\left(\ee^{ \ii\<z, \xi \>} - 1 - \ii\<z, \xi \>\right)\nu_{j,\beta}\left(t_k,\dd z\right)\, \left(\widehat{\Mc}^{\xi}(t,t_k)\right)^{\b}
            + \sum_{|\alpha |\leq 2} \left(\ii\xi\right)^{\a}
            a_{\alpha,j}\left(t_k,\widehat{\Mc}^{\xi}(t,t_k)\right).
    \label{eq:Gh.def}
\end{align}
Thus we have proved the following lemma
\begin{lemma}
\label{lemmand}
We have
\begin{align}
 \hat{\Lc}^{\xi}_{n}(t,T)
    &=  \sum_{k=1}^n
            \int_{t}^T \dd t_1 \int_{t_1}^T \dd t_2 \cdots \int_{t_{k-1}}^T \dd t_k
            \sum_{i \in I_{n,k}}
            \hat{\Gc}^{\xi}_{i_1}(t,t_1)
            \hat{\Gc}^{\xi}_{i_{2}}(t,t_{2})
            \cdots \hat{\Gc}^{\xi}_{i_k}(t,t_k)  , \label{eq:Lnh}
\end{align}
where $I_{n,k}$ as defined in \eqref{eq:Ink}.
\end{lemma}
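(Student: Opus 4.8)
The plan is to read off $\hat{\Lc}^{\xi}_{n}(t,T)$ directly from its defining relation \eqref{e33}, namely $\Lc^{x}_{n}(t,T)\,\ee_\xi = \hat{\Lc}^{\xi}_{n}(t,T)\,\ee_\xi$, with $\ee_\xi$ as in \eqref{osci}. Since the finite sums over $I_{n,k}$ in \eqref{eq:Ink} and the iterated time integrals in \eqref{eq:Ln} are $\xi$-free, they pass through the symbol map unchanged, so the whole statement reduces to computing the symbol of a single ordered product $\Gc_{i_1}(t,t_1)\Gc_{i_2}(t,t_2)\cdots\Gc_{i_k}(t,t_k)$ and then reassembling under the same sums and integrals. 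The base case $k=1$ is precisely \eqref{eq:Gh.def}: applied to $\ee_\xi$, the jump kernel $\ee^{\<z,\nabla_x\>}-1-\<z,\nabla_x\>$ becomes the scalar $\ee^{\ii\<z,\xi\>}-1-\ii\<z,\xi\>$, each $D^{\alpha}_x$ becomes $(\ii\xi)^{\alpha}$, and the polynomial coefficients in $\Mc(t,t_k)$ turn into polynomials in $\widehat{\Mc}^{\xi}(t,t_k)$ through the monomial identity \eqref{e40}. So for a single factor there is nothing left to do.

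For $k\ge 2$ I would proceed by duality on the Fourier side. Writing a Schwartz function as $f=\tfrac{1}{(2\pi)^d}\int\tilde f(\xi)\,\ee_\xi\,\dd\xi$ and using the $k=1$ identity, one gets $\Gc_j\,f=\tfrac{1}{(2\pi)^d}\int[(\hat{\Gc}^{\xi}_j)^{\text{T}}\tilde f]\,\ee_\xi\,\dd\xi$, where $(\,\cdot\,)^{\text{T}}$ is the formal transpose in $\xi$ and the only $\xi$-derivatives come from the $-\ii\nabla_\xi$ inside $\widehat{\Mc}^{\xi}$, cf.\ \eqref{eq:M}. Composing the $k$ factors in this representation replaces each $\Gc_{i_j}$ by $(\hat{\Gc}^{\xi}_{i_j})^{\text{T}}$ acting on the density; specializing to $\tilde f=(2\pi)^d\delta(\,\cdot\,-\xi)$ and integrating by parts back recovers an ordered product of the $\hat{\Gc}^{\xi}_{i_j}$ applied to $\ee_\xi$. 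Throughout, \eqref{e40} guarantees that the monomials built from $\Mc$ and from $\widehat{\Mc}^{\xi}$ agree at every stage, so no computation beyond the single-operator case \eqref{eq:Gh.def} is needed; the content is purely the bookkeeping of how the transposes compose.

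The main obstacle is the operator ordering that emerges from this bookkeeping. Because $\widehat{\Mc}^{\xi}(t,t_j)=F(\xi,t,t_j)-\ii\nabla_\xi$ carries a $t_j$-dependent multiplication part, its components fail to commute at distinct times: a direct computation gives
\begin{align}
\big[\widehat{\Mc}^{\xi}_a(t,t_j),\,\widehat{\Mc}^{\xi}_b(t,t_{j'})\big]
    &=  \ii\big(\partial_{\xi_b}F_a(\xi,t,t_j)-\partial_{\xi_a}F_b(\xi,t,t_{j'})\big),
\end{align}
which is generally nonzero for $t_j\ne t_{j'}$ and vanishes at equal times by the symmetry of $C$ and of $z_az_b$. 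Hence the symbols $\hat{\Gc}^{\xi}_{i_j}(t,t_j)$ do not commute along the product, and the delicate point is to pin down the precise \emph{time-ordered} arrangement produced by the transpose computation above — equivalently, to track how the $\nabla_\xi$ in each $\widehat{\Mc}^{\xi}$ differentiates the $\xi$-dependent factors generated by the factors sitting to its right. One then checks that, after reassembling under the symmetric sums over $I_{n,k}$ and the time-simplex integrals of \eqref{eq:Ln}, this ordered product is exactly the one displayed in \eqref{eq:Lnh}. Everything else — the base case and the identity \eqref{e40} — is already in hand, so fixing this ordering is the only genuinely nontrivial step.
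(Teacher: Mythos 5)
Your reduction and your base case are, in effect, the paper's entire proof: the text preceding the lemma derives the one-factor symbol \eqref{eq:Gh.def} from \eqref{e40} and then simply composes the symbols in the displayed order, with no argument for $k\ge 2$. The Fourier-duality/transpose machinery you set up for the composition step is an unnecessary detour: since each $\Gc_{j}(t,t_{k})$ acts only in $x$ and each $\hat{\Gc}^{\xi}_{j'}(t,t_{k'})$ acts only in $\xi$, the two kinds of operators commute, and peeling factors off the product one at a time gives directly
\begin{align}
\Gc_{i_1}(t,t_1)\Gc_{i_2}(t,t_2)\cdots\Gc_{i_k}(t,t_k)\,\ee^{\ii\<\xi,x\>}
    &=  \hat{\Gc}^{\xi}_{i_k}(t,t_k)\cdots\hat{\Gc}^{\xi}_{i_2}(t,t_2)\,\hat{\Gc}^{\xi}_{i_1}(t,t_1)\,\ee^{\ii\<\xi,x\>},
\end{align}
that is, the symbol map reverses products (it is an anti-homomorphism, exactly like transposition; your own transpose bookkeeping, pushed through, produces the same reversal since $(AB)^{\text{T}}=B^{\text{T}}A^{\text{T}}$). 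You were right to flag the non-commutativity of the symbols at distinct times as the crux — this is precisely the point the paper passes over in silence.

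The gap is your final step: ``one then checks that, after reassembling under the sums over $I_{n,k}$ and the time-simplex integrals, this ordered product is exactly the one displayed in \eqref{eq:Lnh}.'' This is asserted, not proved, and the check in fact fails — no amount of bookkeeping recovers the displayed order, because by your own commutator formula the two orderings differ by nonzero terms that the simplex integration does not cancel. Concretely, take $d=1$, $\nu\equiv 0$, $\Ac_0(t)=a\,\d_x^2$ with $a>0$ constant, and $\Ac_1(t)=x$; then $\Gc_1(t,s)=\Mc(t,s)=x+2a(s-t)\d_x$ and $\hat{\Gc}^{\xi}_1(t,s)=2\ii a(s-t)\xi-\ii\d_{\xi}$. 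For $n=k=2$ one has $I_{2,2}=\{(1,1)\}$, so no index symmetrization is available, and a one-line computation gives
\begin{align}
\Gc_1(t,t_1)\Gc_1(t,t_2)\,\ee^{\ii\xi x}
    &=  \Big[\big(x+2\ii a(t_1-t)\xi\big)\big(x+2\ii a(t_2-t)\xi\big)+2a(t_1-t)\Big]\ee^{\ii\xi x}, \\
\hat{\Gc}^{\xi}_1(t,t_1)\,\hat{\Gc}^{\xi}_1(t,t_2)\,\ee^{\ii\xi x}
    &=  \Big[\big(x+2\ii a(t_1-t)\xi\big)\big(x+2\ii a(t_2-t)\xi\big)+2a(t_2-t)\Big]\ee^{\ii\xi x},
\end{align}
whose extra terms integrate over $\{t\le t_1\le t_2\le T\}$ to different values (for $t=0$: $aT^3/3$ versus $2aT^3/3$); the first line is what the Dyson series \eqref{eq:un.def} actually produces. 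So what your plan proves, once completed honestly, is \eqref{eq:Lnh} with the product written in reverse, $\hat{\Gc}^{\xi}_{i_k}(t,t_k)\cdots\hat{\Gc}^{\xi}_{i_1}(t,t_1)$ (equivalently, with the orientation of the time simplex reversed) — that is the version consistent with \eqref{e33}, \eqref{eq:Ln} and Theorem \ref{thm:fourier}. The deferred ``check'' is therefore not a missing detail but a false statement, and the defect is shared by the displayed formula and the paper's own silent composition step; a correct write-up must keep the reversed ordering rather than try to argue it away.
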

\noindent The following theorem extends the Fourier pricing formula \eqref{eq:u0} to higher order
approximations.
\begin{theorem}
\label{thm:fourier}
Under the assumptions of Proposition \ref{thm:dyson},
for any $n\geq 1$ we have 
\begin{align}
  u_n(t)&=\frac{1}{(2 \pi )^d}  \int_{\Rb^d} \hat{P}_n(t,x,T,\xi)\phih(-\xi)\,\dd \xi, \label{eq:un_fourier}
\end{align}
where $\hat{P}_n(t,x,T,\xi)$ is the $n$th order term of the approximation of the characteristic
function of $X$ (cf. Remark \ref{r5}). Explicitly, we have
\begin{align}\label{e22}
 \hat{P}_n(t,x,T,\xi):=\hat{P}_0(t,x,T,\xi)
    \(\ee^{-\ii \<\xi,x \>} \hat{\Lc}^{\xi}_{n}(t,T)\ee^{\ii \<\xi, x \>} \)
\end{align}
where $\hat{P}_0(t,x,T,\xi)$ is the $0$th order approximation in \eqref{eq:P0} and
$\hat{\Lc}^{\xi}_{n}(t,T)$ is the differential operator defined in \eqref{eq:Lnh}.
%
\end{theorem}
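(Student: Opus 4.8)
The plan is to obtain \eqref{eq:un_fourier} by pushing the pseudo-differential operator $\Lc^{x}_{n}(t,T)$ of Proposition \ref{thm:dyson} inside the Fourier integral that represents $u_0$, and then reading off its action on the oscillating exponentials from the symbol computed in Lemma \ref{lemmand}. Most of the analytic work is already in place before the statement; the theorem is essentially an assembly of the preceding identities.

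First I recall the two ingredients. By Proposition \ref{thm:dyson} we have $u_n(t) = \Lc^{x}_{n}(t,T)\,u_0(t)$, and by \eqref{eq:u0}--\eqref{eq:P0},
\begin{align}
u_0(t,x) = \frac{1}{(2\pi)^d}\int_{\Rb^d}\hat{P}_0(t,x,T,\xi)\,\phih(-\xi)\,\dd\xi, \qquad \hat{P}_0(t,x,T,\xi) = \ee^{\ii\<\xi,x\>+\Phi_0(t,T,\xi)} .
\end{align}
Because $\phih$ is Schwartz and, by hypothesis, $\Phi_0$ is smooth in $\xi$ (with $\Psi_0$ in \eqref{eq:Psi0} well defined thanks to \eqref{e21}), I would bring $\Lc^{x}_{n}(t,T)$ under the integral sign:
\begin{align}
u_n(t,x) = \Lc^{x}_{n}(t,T)\,u_0(t,x) = \frac{1}{(2\pi)^d}\int_{\Rb^d}\Lc^{x}_{n}(t,T)\,\hat{P}_0(t,x,T,\xi)\,\phih(-\xi)\,\dd\xi .
\end{align}

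The crux is to evaluate $\Lc^{x}_{n}(t,T)\,\hat{P}_0(t,x,T,\xi)$ and recognise it as the $\hat{P}_n$ of \eqref{e22}. Since $\Phi_0(t,T,\xi)$ does not depend on $x$, the scalar factor $\ee^{\Phi_0(t,T,\xi)}$ commutes with the $x$-acting operator $\Lc^{x}_{n}(t,T)$ and comes out; the symbol identity \eqref{e33} then converts the action on $\ee^{\ii\<\xi,x\>}$ into the differential operator $\hat{\Lc}^{\xi}_{n}(t,T)$ of Lemma \ref{lemmand}:
\begin{align}
\Lc^{x}_{n}(t,T)\,\hat{P}_0(t,x,T,\xi) = \ee^{\Phi_0(t,T,\xi)}\,\Lc^{x}_{n}(t,T)\,\ee^{\ii\<\xi,x\>} = \ee^{\Phi_0(t,T,\xi)}\,\hat{\Lc}^{\xi}_{n}(t,T)\,\ee^{\ii\<\xi,x\>} .
\end{align}
Factoring $\ee^{\Phi_0} = \hat{P}_0\,\ee^{-\ii\<\xi,x\>}$ rewrites the right-hand side as $\hat{P}_0(t,x,T,\xi)\big(\ee^{-\ii\<\xi,x\>}\hat{\Lc}^{\xi}_{n}(t,T)\ee^{\ii\<\xi,x\>}\big)$, which is exactly \eqref{e22}; substituting into the displayed integral gives \eqref{eq:un_fourier}. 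To match the characteristic-function interpretation of Remark \ref{r5}, I would additionally observe that taking $\phi=\ee_{\xi}$ makes $u_0(t,\cdot) = \Pc_0(t,T)\ee_{\xi} = \hat{P}_0(t,\cdot,T,\xi)$ by \eqref{eq:P.exp}, so that $\hat{P}_n = \Lc^{x}_{n}(t,T)\hat{P}_0$ is precisely the $n$th order term produced by setting $\phi=\ee_{\xi}$ in \eqref{eq:un.def}, consistent with \eqref{e22}.

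The one genuinely technical step is the interchange of $\Lc^{x}_{n}(t,T)$ with the $\dd\xi$-integral: through the jump parts of the operators $\Gc_j$, $\Lc^{x}_{n}$ is integro-differential in $x$, so I must justify differentiating and shifting under the integral sign. This is where I expect the only real care to be needed, but it is routine: the Schwartz decay of $\phih(-\xi)$ together with the smoothness and polynomial-type growth in $\xi$ of $\Phi_0$ (again via \eqref{e21}) dominate the integrand and all the $\xi$-derivatives appearing in $\hat{\Lc}^{\xi}_{n}$, so the exchange is legitimate and the remaining manipulations are the bookkeeping already carried out in Lemma \ref{lemmand}.
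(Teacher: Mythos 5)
Your proof is correct and follows essentially the same route as the paper: both start from $u_n(t)=\Lc^{x}_{n}(t,T)u_0(t)$ (Proposition \ref{thm:dyson}), pass the $x$-acting operator under the Fourier integral representing $u_0$, and then use the symbol identity \eqref{e33} together with the $x$-independence of $\Phi_0$ to replace $\Lc^{x}_{n}(t,T)\ee^{\ii\<\xi,x\>}$ by $\hat{\Lc}^{\xi}_{n}(t,T)\ee^{\ii\<\xi,x\>}$, which yields \eqref{e22} and hence \eqref{eq:un_fourier}. Your added remarks on justifying the interchange of the operator with the integral and on consistency with Remark \ref{r5} go slightly beyond the paper's terser argument but do not change the method.
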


\begin{proof}
We first note that, since the approximating operator $\Lc^{x}_n$ acts in the $x$ variables, then
it commutes\footnote{This was one of the main points of the {\it adjoint expansion method}
proposed by \cite{pascucci}.} with the Fourier pricing operator \eqref{eq:u0}. Thus, by
\eqref{eq:un} combined with \eqref{eq:u0}, we get
\begin{align}
u_n(t)=\Lc^{x}_n(t,T)u_0(t)&=\frac{1}{(2 \pi )^d}  \int_{\Rb^d}\Lc^{x}_n(t,T)
            \ee^{\ii \< \xi, x \> + \Phi_0(t,T,\xi)} \phih(-\xi)\, \dd \xi\\
            &=\frac{1}{(2 \pi )^d}  \int_{\Rb^d}
            \hat{P}_0(t,x,T,\xi)
                        \(\ee^{-\ii \<\xi,x \>} \hat{\Lc}^{\xi}_{n}(t,T)\ee^{\ii \<\xi, x \>} \)
                        \phih(-\xi)\, \dd
            \xi,
\end{align}
and the thesis follows from \eqref{e33}. 
\end{proof}
\begin{remark}
Computing the term in parenthesis above $\(\ee^{-\ii \<\xi,x \>} \hat{\Lc}^{\xi}_{n}(t,T)\ee^{\ii
\<\xi, x \>} \)$ is a straightforward exercise since the symbol $\hat{\Lc}^{\xi}_{n}(t,T)$, given
in \eqref{eq:Lnh}, is a differential operator.
\end{remark}

\begin{remark}
In case of non-integrable payoffs (e.g. Call and Put options), the Fourier representation
\eqref{eq:un_fourier} can be easily extended by considering the Fourier transform on the imaginary
line $\xi=\xi_r + \ii \xi_{\ii} $. For instance, since the Call option payoff $\phi(x)=\left(\ee^x
-\ee^k\right)^+$ is not integrable, its Fourier transform $\phih(-\xi)$ must be computed in a
generalized sense by fixing an imaginary component of the Fourier variable $\xi_{\ii}<-1$.
\end{remark}

\begin{remark}
Observe that the $N$th order approximation \eqref{eq:ubar.N}-\eqref{eq:un_fourier} requires only a
single Fourier inversion
\begin{align}\label{eq:fourier_price_N}
\ub_N(t,x)
    &=  \sum_{n=0}^N u_n(t,x)
        =       \frac{1}{(2\pi)^d}\sum_{n=0}^N \int_{\Rb^d}  \hat{P}_n(t,x,T,\xi)\phih(-\xi)
        \,  \dd \xi .
\end{align}
Moreover, when evaluating the inverse transform, the number of dimensions over which one must
integrate numerically is equal to the number of components of $x$ that appear in the option payoff
$\varphi$. This is due to the fact that the Fourier transform of a constant is a Dirac delta
function. In particular, let $\varphi(x)\equiv \bar{\varphi}(\bar{x})$ with
$\bar{x}=(x_1,\cdots,x_{d'})$, for some $d'<d$. Then we have
$\hat{\varphi}(\xi)=(2\pi)^{d-d'}\hat{\bar{\varphi}}\left(\bar{\xi}\,\right)\delta_0({\xi_{d'+1}})\cdots\delta_0({\xi_{d}})$
with $\bar{\xi}=(\xi_1,\cdots,\xi_{d'})$, and thus
\begin{align}
\ub_N(t,x)
    &        =       \frac{1}{(2\pi)^{d'}}\sum_{n=0}^N \int_{\Rb^{d'}} \hat{P}_n\left(t,x,T,\left(\bar{\xi},0\right)\right) \hat{\bar{\varphi}}\left(-\bar{\xi}\,\right) \,\dd \bar{\xi}.
\end{align}
\end{remark}

\section{Example: Heston model with stochastic jump-intensity}
\label{sec:heston}
Consider the following model for an asset $S = \ee^X$, written under the pricing measure $\Qb$ assuming zero interest rates
\begin{align}
\dd X_t
    &=  \( -\frac{1}{2} - \int_\Rb \nu(\dd \zeta)( \ee^{\zeta}-1-\zeta) \) Z_t \dd t + \sqrt{Z_t} \dd W_t
            + \int_\Rb \zeta \dd \Nt(t,Z_t,\dd t,\dd \zeta) ,
        \\
\dd Z_t
    &=  \kappa (\theta - Z_t ) \dd t + \del \sqrt{Z_t} \dd B_t , \qquad
\dd \< W, B \>_t
    =  \rho \dd t .
\end{align}
Note that, just as in the Heston model, the instantaneous volatility of $X$ is given by
$\sqrt{Z_t}$, where $Z$ is a CIR process.  Likewise, the instantaneous arrival rate of jumps of
size $\dd \zeta$ is given by $Z_t \nu(\dd \zeta)$, where $\nu$ is a L\'evy measure satisfying all
of the usual integrability conditions.  The generator $\Ac$ of the process $(X,Z)$ is given by
\begin{align}
\Ac
    &=    z \( \mu \d_x + \frac{1}{2} \d_x^2 + \int_\Rb \nu(\dd \zeta) ( \ee^{\zeta \d_x} - 1 - \zeta \d_x) \)
                + \kappa (\theta - z) \d_z + \frac{1}{2} \del^2 z \d_z^2 + \rho \del z \d_x \d_y , \\
\mu
    &=  - \frac{1}{2} - \int_\Rb \nu(\dd \zeta) ( \ee^{\zeta} - 1 - \zeta) .
\end{align}
The characteristic function $\hat{P}(t,x,z,T,\xi) := \Eb[ \ee^{\ii \xi X_T}| X_t = x, Z_t = z]$ is obtained in \cite{carr2004time} by expressing the process $X$ as a time-changed L\'evy process.  One can also obtain the characteristic function by solving for the Fourier transform of the fundamental solution corresponding to the operator $(\d_t + \Ac)$.  We have
\begin{align}
\hat{P}(t,x,z,T,\xi)
    &=  \ee^{\ii \xi x + C(T-t,\xi) + z\,D(T-t,\xi)} , \\
C(\tau,\xi)
    &=  \frac{\kappa \theta}{ \del^2} \( (\kappa - \rho \delta  \ii \xi + d(\xi) ) \tau
            -2 \log \[ \frac{1-f(\xi) \ee^{d(\xi)\tau}}{1-f(\xi)}\]\) , \\
D(\tau,\xi)
    &=  \frac{\kappa - \rho \del \ii \xi + d(\xi)}{\del^2} \frac{1-\ee^{d(\xi)\tau}}{1-f(\xi) \ee^{d(\xi)\tau}} , \\
f(\xi)
    &=  \frac{\kappa - \rho \del \ii \xi + d(\xi)}{\kappa - \rho \del \ii \xi - d(\xi)} , \\
d(\xi)
    &=  \sqrt{ -\del^2 \, 2 \psi(\xi) + (\kappa - \rho \ii \xi \del)^2} , \\
\psi(\xi)
     &=      \ii \mu \xi -\tfrac{1}{2} \xi^2 + \int_\Rb \nu(\dd \zeta)(\ee^{\ii \xi \zeta} - 1 - \ii \xi \zeta) .
\end{align}
With an explicit expression for $\hat{P}(t,x,z,T,\xi)$ available, the price of a European call option can be computed using standard Fourier methods
\begin{align}
u(t,x,z)
    &=  \frac{1}{2\pi} \int_\Rb \dd \xi_r \, \hat{P}(t,x,z,T,\xi) \phih(-\xi) , &
\phih(\xi)
    &=  \frac{-\ee^{k-\ii k \xi}}{ \ii \xi + \xi^2 } , &
\xi
    &=  \xi_r + \ii \xi_i , &
\xi_i
    &<  -1 . \label{eq:u.Heston}
\end{align}
Note that, since the call option payoff $\phi(x)=(\ee^x -\ee^k)^+$ is not in $L^1(\Rb)$, its Fourier
transform $\phih(\xi)$ must be computed in a generalized sense by fixing an imaginary component of
the Fourier variable $\xi_i < -1$.
\par
Also of interest are sensitivities of option prices or \emph{Greeks}.  In particular, consider the $\Delta$ and the $\Gam$, which are defined as
\begin{align}
\Delta(t,x,z)
    &:= \d_s u(t,x(s),z)
    =       \ee^{-x} \d_x u(t,x,z) , \label{eq:Delta} \\
\Gam(t,x,z)
    &:= \d_s^2 u(t,x(s),z)
    =       \ee^{-2x}( \d_x^2 - \d_x )u(t,x,z) , \label{eq:Gamma}
\end{align}
where we have used $x(s)=\log s$.
When computing terms of the form $\d_x^m u(t,x,z)$, observe that the differential operator $\d_x^m$ acts only on the characteristic function $\hat{P}$ appearing in \eqref{eq:u.Heston} and not on the Fourier transform $\hat{\phi}$ of the payoff ${\phi}$.  Likewise, when using Theorem \ref{thm:fourier} to compute $\d_x^m \ub_n(t,x,z) = \sum_{i=0}^n \d_x^m u_i(t,x,z)$ the differential operator $\d_x^m$ acts only on $\hat{P}_i$ in \eqref{eq:un_fourier}.
\par
Now, we specialize to the case where jumps are normally distributed
\begin{align}
\nu(\dd \zeta)
    &=  \frac{\lam}{\sqrt{2 \pi s^2}}\exp \( \frac{-(\zeta - m)^2}{2 s^2} \) .
\end{align}
In Figure \ref{fig:HestonJumps} we plot the implied volatility $\sig$ corresponding to the exact
price $u$ as well as the implied volatility $\sigb_2$ corresponding to our second order
approximation $\ub_2$.  To compute $\sig$ we first compute option prices using \eqref{eq:u.Heston}; we then invert the Black-Scholes equation numerically in order to obtain the implied volatility $\sig$.  To compute our second order approximation of implied volatility $\sigb_2$ we first compute our second order approximation for prices $\ub_2$ using Theorem \ref{thm:fourier}; we then invert the Black-Scholes equation numerically in order to obtain $\sigb_2$.  Values from Figure \ref{fig:HestonJumps} can be found in Table
\ref{tab:IV}.  In Figure \ref{fig:Delta} we plot the exact $\Del$ as well as our second order approximation $\bar{\Del}_2$.  In Figure \ref{fig:Gamma} we plot the exact $\Gamma$ as well as our second order approximation $\bar{\Gam}_2$.  Values from Figures \ref{fig:Delta} and
\ref{fig:Gamma} are given in Tables \ref{tab:Delta} and \ref{tab:Gamma} respectively.  Exact Greeks are computed by combining \eqref{eq:u.Heston}, \eqref{eq:Delta} and \eqref{eq:Gamma}.  Approximate Greeks are computed by combining Theorem \ref{thm:fourier} and equations \eqref{eq:Delta} and \eqref{eq:Gamma}.

%
%

\section{Conclusion}
In this paper we derive a family of asymptotic expansions for European option prices when the underlying is modeled as a $d$-dimensional time inhomogeneous L\'evy-type process.  By combining the classical Dyson series expansion with a novel polynomial expansion of the generator, we obtain two equivalent representations for approximate option price: (i) as an integro-differential operator acting on the order zero price, and (ii) as a Fourier transform.  We implement our pricing approximation on a Heston-like model which allows for both stochastic volatility and stochastic jump intensity.  We find that our second order expansion provides and excellent approximation for prices (as seen through corresponding implied volatilities), as well as for the Greeks $\Del$ and $\Gam$.

%
%

\bibliographystyle{chicago}
\bibliography{Bibtex-Master-3.00}

%
%

\begin{figure}
\centering
\begin{tabular}{ | c | c | }
\hline
$t=0.10$ & $t=0.25$ \\
\includegraphics[width=.46\textwidth,height=0.23\textheight]{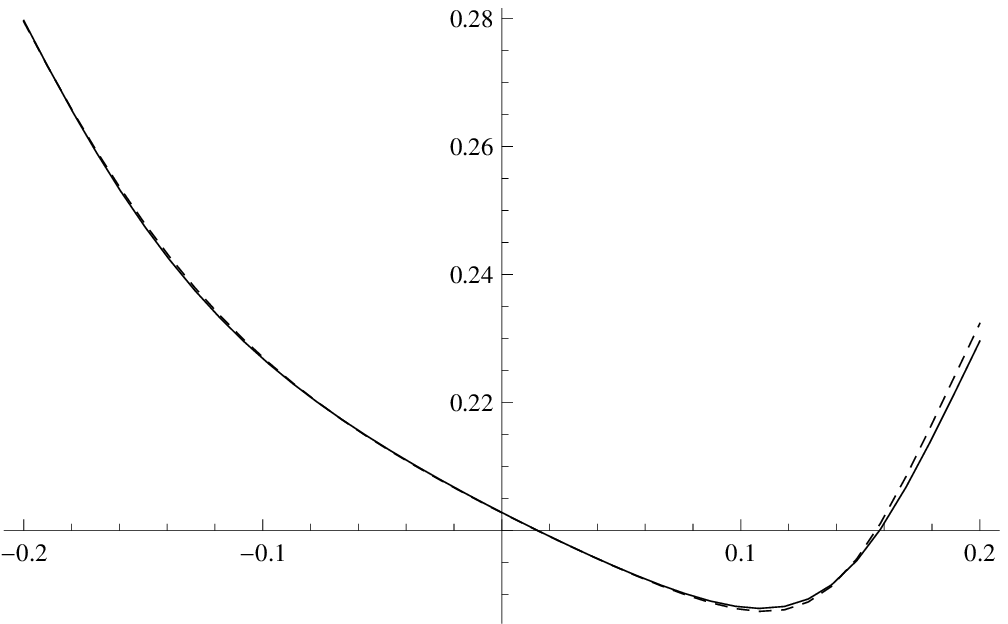} &
\includegraphics[width=.46\textwidth,height=0.23\textheight]{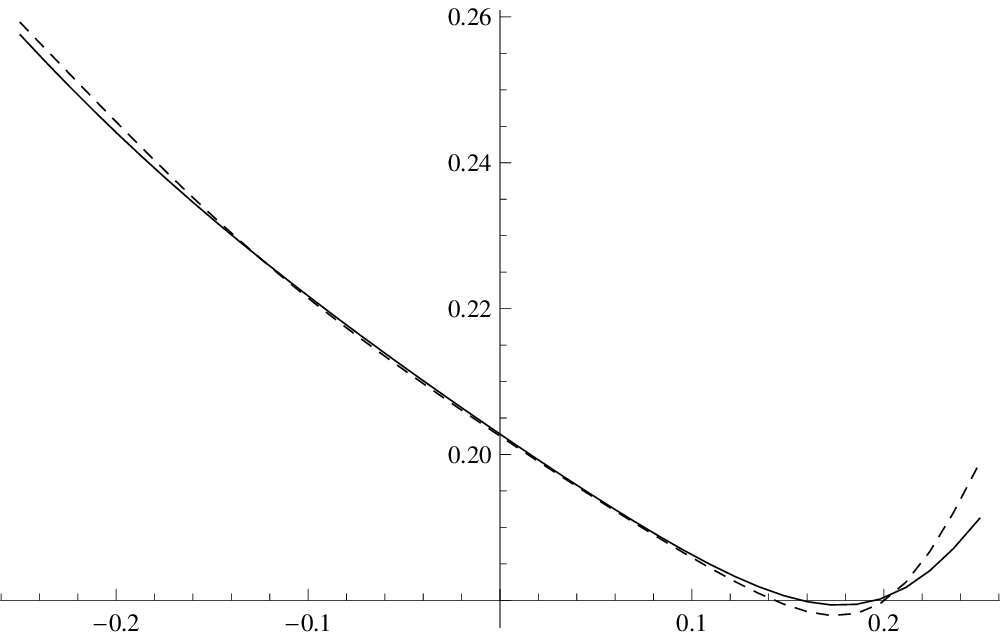}\\ \hline
$t=0.50$ & $t=1.00$ \\
\includegraphics[width=.46\textwidth,height=0.23\textheight]{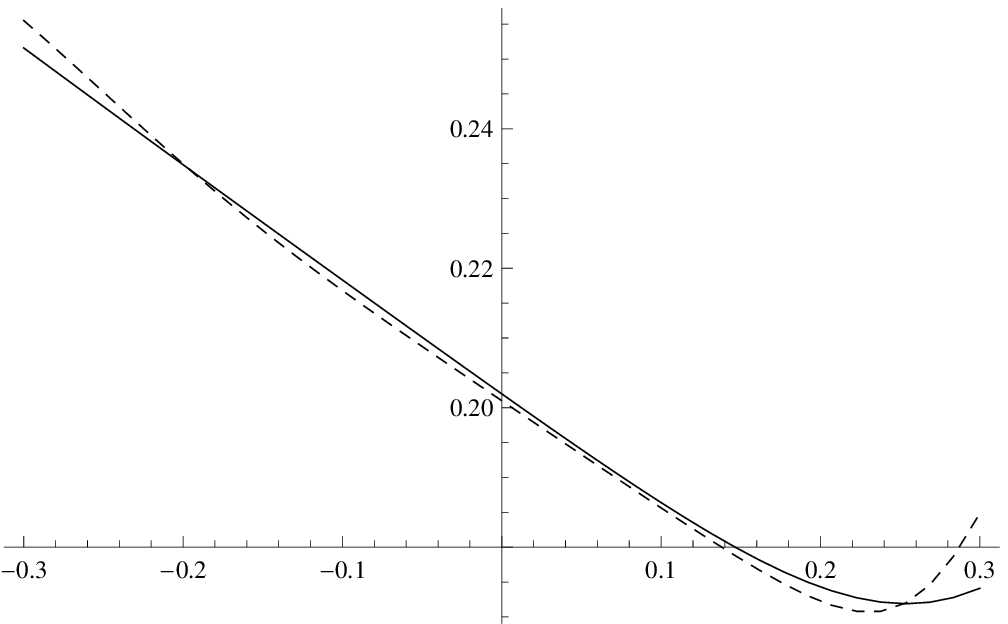} &
\includegraphics[width=.46\textwidth,height=0.23\textheight]{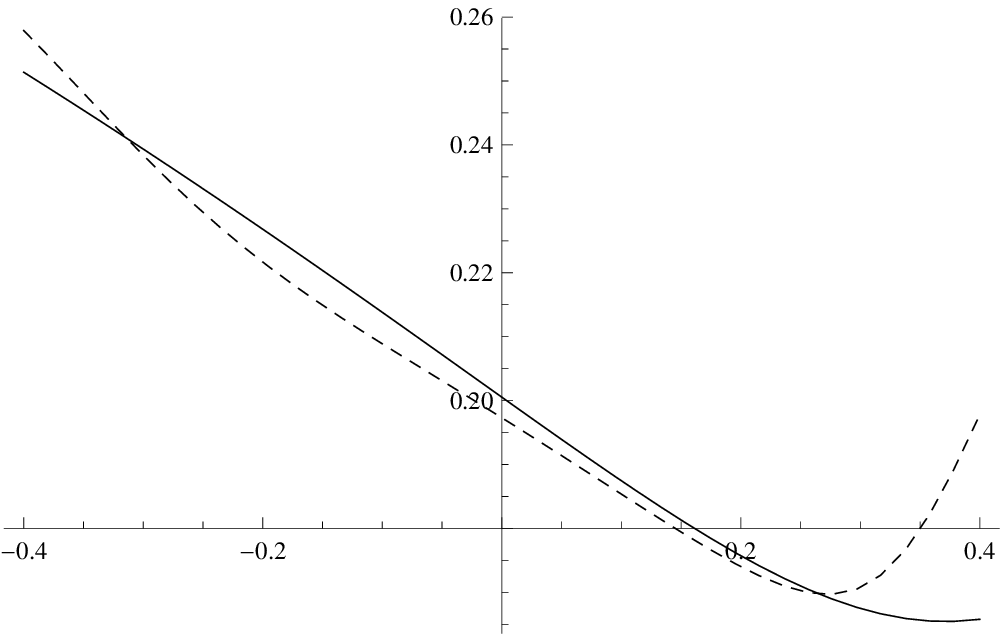}\\ \hline
\end{tabular}
\begin{align}
\nu(\dd \zeta)
    &=  \frac{\lam}{\sqrt{2 \pi s^2}}\exp \( \frac{-(\zeta - m)^2}{2 s^2} \) .
\end{align}
\caption{For the model considered in Section \ref{sec:heston}, we plot the implied volatility $\sig$ corresponding to the exact option price $u$ (solid black) as well as the implied volatility $\sigb_2$ corresponding to our second order option price approximation $\ub_2$ (dashed black).  The units of the horizontal axis are $\log$ strike $k:=\log K$.  Approximate prices are computed using the Taylor series expansion of $\Ac(t)$ as described in Example \ref{example:Taylor}.  We assume the L\'evy measure $\nu$ is as parametrized above.  The following parameters are used in all four plots: $\kappa = 1.15$, $\theta = 0.04$, $\del=0.2$, $\rho = -0.7$, $z = \theta$, $x=0$, $m=-0.1$, $s=0.2$, $\lam=2.0$.}
\label{fig:HestonJumps}
\end{figure}

\begin{table}
\centering
\begin{tabular}{c|c|ccccccccc}
\hline
            & $k-x$                                 &  -0.2   & -0.15   &   -0.1  &     -0.05  &    0.00     &  0.05    &   0.1     &   0.15    &   0.2  \\
\hline
                & $\sig$                            &  0.2797   &  0.2478   &  0.2269   &  0.2133   &  0.2028    &  0.1940  &  0.1881    &  0.1960  &  0.2296\\
t=0.10  & $\bar{\sig}_2$            &  0.2795   &  0.2483   &  0.2271   &  0.2132   &  0.2028    &  0.1939  &  0.1877    &  0.1963  &  0.2324\\
                &   $\text{rel. err.}$  &  0.0006   &  0.0018   &  0.0009   &  0.0003   &  0.0002    &  0.0001  &  0.0020    &  0.0018  &  0.0120\\
\hline
            & $\sig$                            &  0.2441   &  0.2323    &  0.2217   &  0.2120   &  0.2028   &  0.1941   &  0.1863   &  0.1805   &  0.1803\\
 t=0.25 & $\bar{\sig}_2$            &  0.2456   &  0.2328    &  0.2215   &  0.2116   &  0.2025   &  0.1939   &  0.1859   &  0.1793   &  0.1799\\
                &   $\text{rel. err.}$  &  0.0059   &  0.0018    &  0.0013   &  0.0020   &  0.0013   &  0.0009   &  0.0021   &  0.0067   &  0.0027\\
\hline
            & $\sig$                            &  0.2348   &  0.2266    &  0.2183   &  0.2101   &  0.202    &  0.1940.  &  0.1864   &  0.1796   &  0.1743\\
 t=0.50 & $\bar{\sig}_2$            &  0.2350   &  0.2254    &  0.2168   &  0.2088   &  0.201    &  0.1933   &  0.1856   &  0.1783   &  0.1723\\
            &   $\text{rel. err.}$  &  0.0005   &  0.0049    &  0.0069   &  0.0063   &  0.004    &  0.0037   &  0.0040   &  0.0070   &  0.0116\\
\hline
            & $\sig$                            &  0.2268   &  0.2204    &  0.2138   &  0.2072   &  0.2005   &  0.1939   &  0.1875   &  0.1813   &  0.1757\\
t=1.00  & $\bar{\sig}_2$            &  0.2217   &  0.2149    &  0.2089   &  0.2031   &  0.1973   &  0.1914   &  0.1854   &  0.1794   &  0.1740\\
                &   $\text{rel. err.}$  &  0.0227   &  0.0246    &  0.0230   &  0.0197   &  0.0160   &  0.0130   &  0.0111   &  0.0103   &  0.0096\\
\hline
\end{tabular}
\caption{Exact implied vols $\sig$, second order approximation $\bar{\sig}_2$ and relative error $|(\bar{\sig}_2-\sig)/\sig|$.  Parameters are the same as those in Figure \ref{fig:HestonJumps}.}
\label{tab:IV}
\end{table}

\begin{figure}
\centering
\begin{tabular}{ | c | c | }
\hline
$t=0.10$ & $t=0.25$ \\
\includegraphics[width=.46\textwidth,height=0.23\textheight]{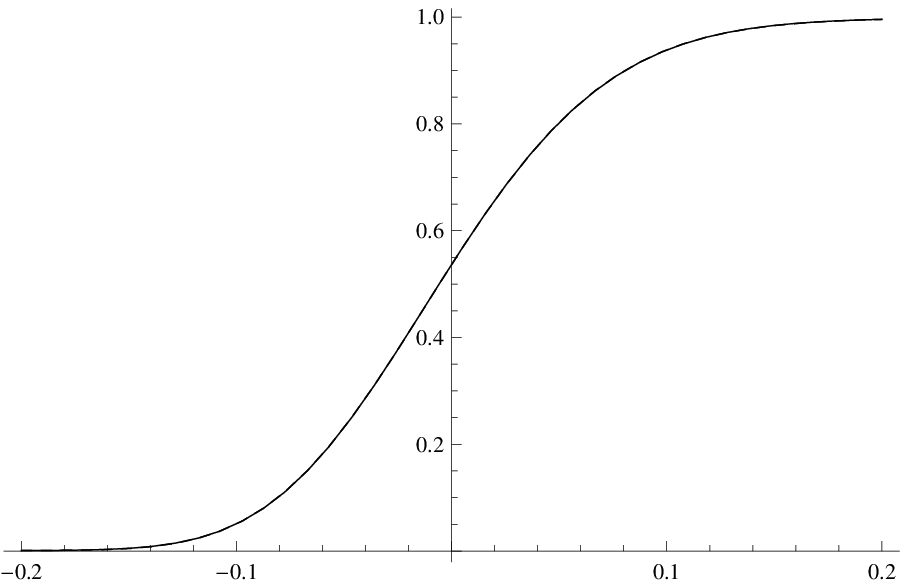} &
\includegraphics[width=.46\textwidth,height=0.23\textheight]{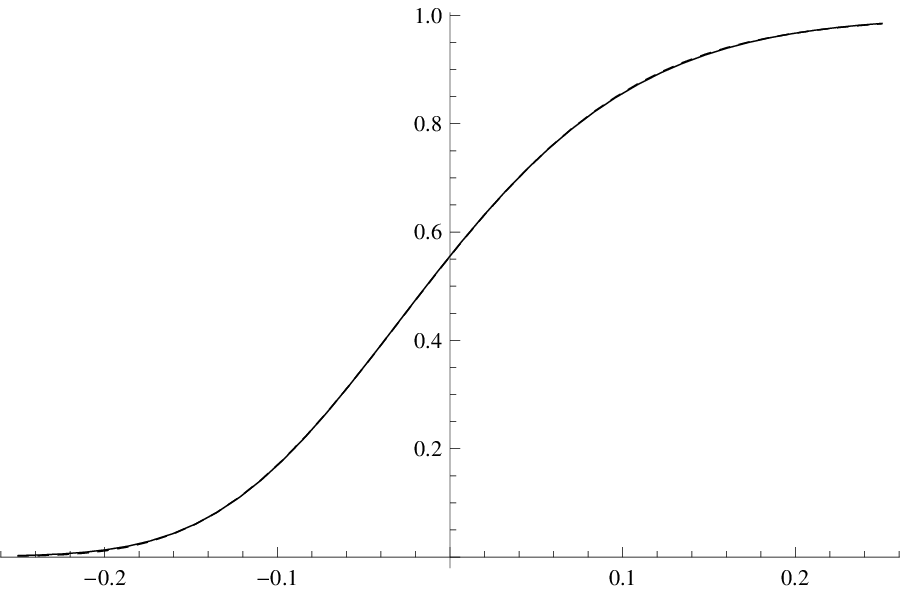}\\ \hline
$t=0.50$ & $t=1.00$ \\
\includegraphics[width=.46\textwidth,height=0.23\textheight]{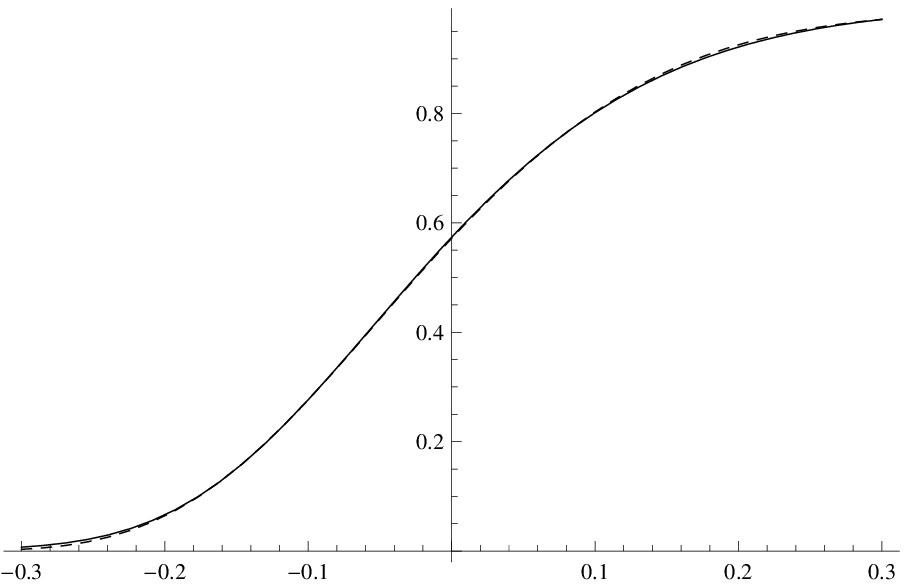} &
\includegraphics[width=.46\textwidth,height=0.23\textheight]{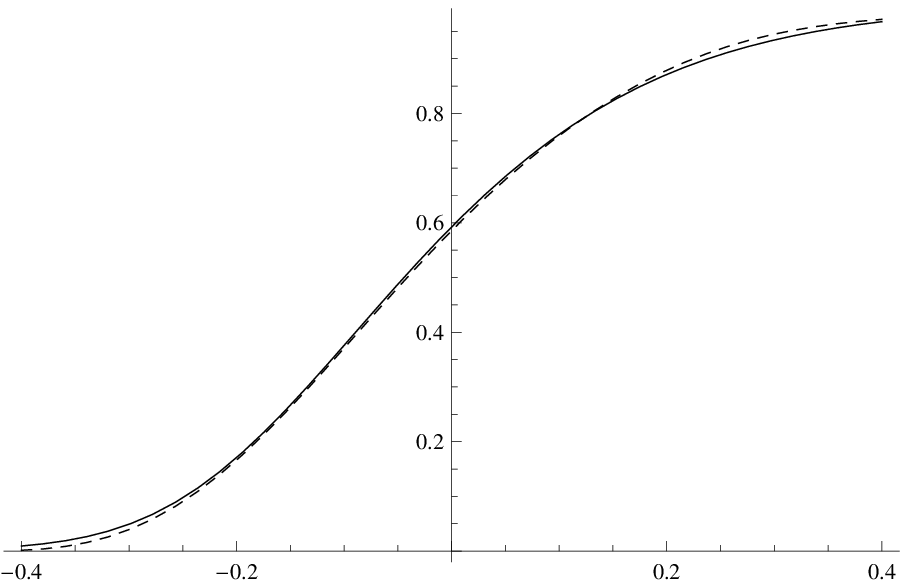}\\ \hline
\end{tabular}
\caption{For the model considered in Section \ref{sec:heston}, we plot the Delta $\Delta$ corresponding to the exact option price $u$ (solid black) as well as the Delta $\bar{\Delta}_2$ corresponding to our second order option price approximation $\ub_2$ (dashed black).  The units of the horizontal axis are $x$.  Approximate prices are computed using the Taylor series expansion of $\Ac(t)$ as described in Example \ref{example:Taylor}.  We assume the L\'evy measure $\nu$ is as given in Figure \ref{fig:HestonJumps}.  The following parameters are used in all four plots: $\kappa = 1.15$, $\theta = 0.04$, $\del=0.2$, $\rho = -0.7$, $z = \theta$, $k=0$, $m=-0.1$, $s=0.2$, $\lam=2.0$.}
\label{fig:Delta}
\end{figure}

\begin{table}
\centering
\begin{tabular}{c|c|ccccccccc}
\hline
            & $x$                                &  -0.2   & -0.15   &  -0.1  &     -0.05  &    0.00     &  0.05    &   0.1     &   0.15    &   0.2  \\
\hline
                & $\Del$                             &  0.0008   &  0.00516  &  0.05084  &  0.2312   &  0.5370   &  0.8024   &  0.9385   &  0.9845   &  0.9959  \\
t=0.10  & $\bar{\Del}_2$             &  0.0009   &  0.00478  &  0.05081  &  0.2313   &  0.5368   &  0.8026   &  0.9387   &  0.9843   &  0.9958  \\
                &   $\text{rel. err.}$   &  0.1309   &  0.07358  &  0.00048  &  0.0006   &  0.0003   &  0.0002   &  0.0002   &  0.0002   &  0.0000  \\
\hline
            & $\Del$                             &  0.01311  &  0.05708  &  0.1690   &  0.3503   &  0.5559   &  0.7329   &  0.8563   &  0.9293   &  0.9672  \\
 t=0.25 & $\bar{\Del}_2$             &  0.0114   &  0.05674  &  0.1696   &  0.3502   &  0.5552   &  0.7330   &  0.8576   &  0.9306   &  0.9673  \\
                &   $\text{rel. err.}$   &  0.1305   &  0.00585  &  0.0035   &  0.0004   &  0.0012   &  0.0000   &  0.0014   &  0.0014   &  0.0000  \\
\hline
            & $\Del$                             &  0.06608  &  0.1506   &  0.2767   &  0.4260   &  0.5739   &  0.7018   &  0.8014   &  0.8731   &  0.9215  \\
 t=0.50 & $\bar{\Del}_2$             &  0.06425  &  0.1508   &  0.2766   &  0.4246   &  0.5719   &  0.7007   &  0.8027   &  0.8766   &  0.9256  \\
            &   $\text{rel. err.}$   &  0.02773  &  0.0014   &  0.0003   &  0.0032   &  0.0034   &  0.0015   &  0.0015   &  0.0040   &  0.0044  \\
\hline
            & $\Del$                             &  0.1708   &  0.2667   &  0.3760   &  0.4878   &  0.5927   &  0.6849   &  0.7618   &  0.8234   &  0.8713  \\
t=1.00  & $\bar{\Del}_2$             &  0.1662   &  0.2627   &  0.3710   &  0.4814   &  0.5857   &  0.6791   &  0.7595   &  0.8262   &  0.8789  \\
                &   $\text{rel. err.}$   &  0.0268   &  0.01496  &  0.0131   &  0.0130   &  0.0117   &  0.0084   &  0.0030   &  0.0033   &  0.0088  \\
\hline
\end{tabular}
\caption{Exact Delta $\Del$, second order approximation $\bar{\Del}_2$ and relative error $|(\bar{\Del}_2-\Del)/\Del|$.  Parameters are the same as those in Figure \ref{fig:Delta}.}
\label{tab:Delta}
\end{table}

\begin{figure}
\centering
\begin{tabular}{ | c | c | }
\hline
$t=0.10$ & $t=0.25$ \\
\includegraphics[width=.46\textwidth,height=0.23\textheight]{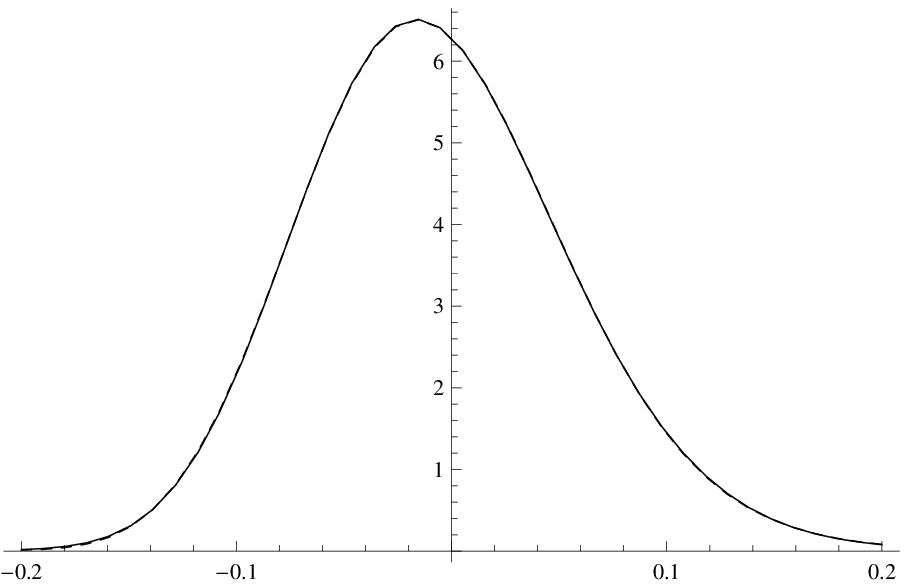} &
\includegraphics[width=.46\textwidth,height=0.23\textheight]{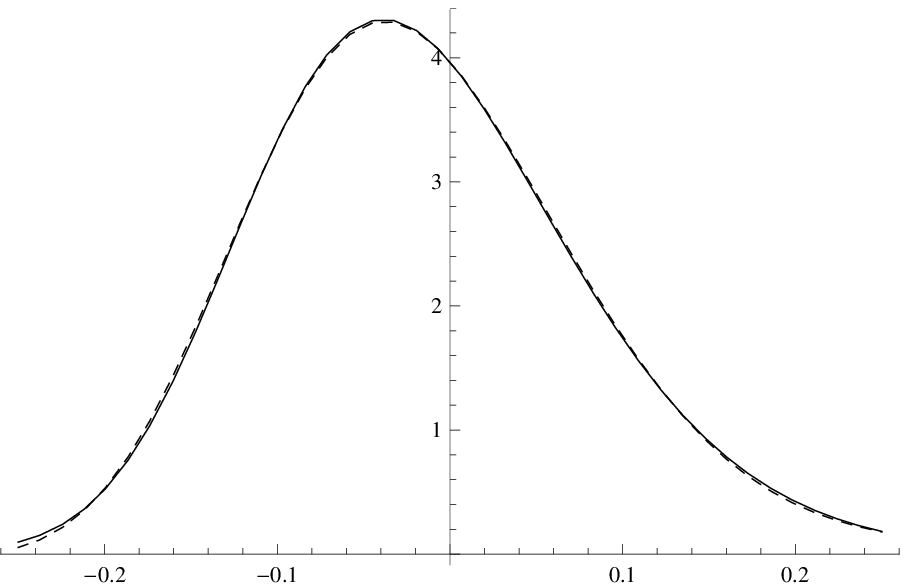}\\ \hline
$t=0.50$ & $t=1.00$ \\
\includegraphics[width=.46\textwidth,height=0.23\textheight]{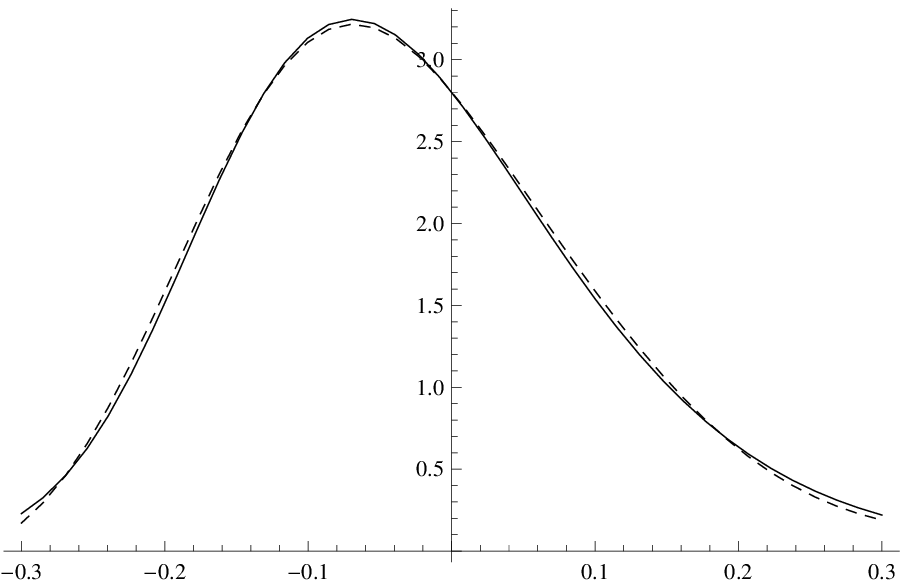} &
\includegraphics[width=.46\textwidth,height=0.23\textheight]{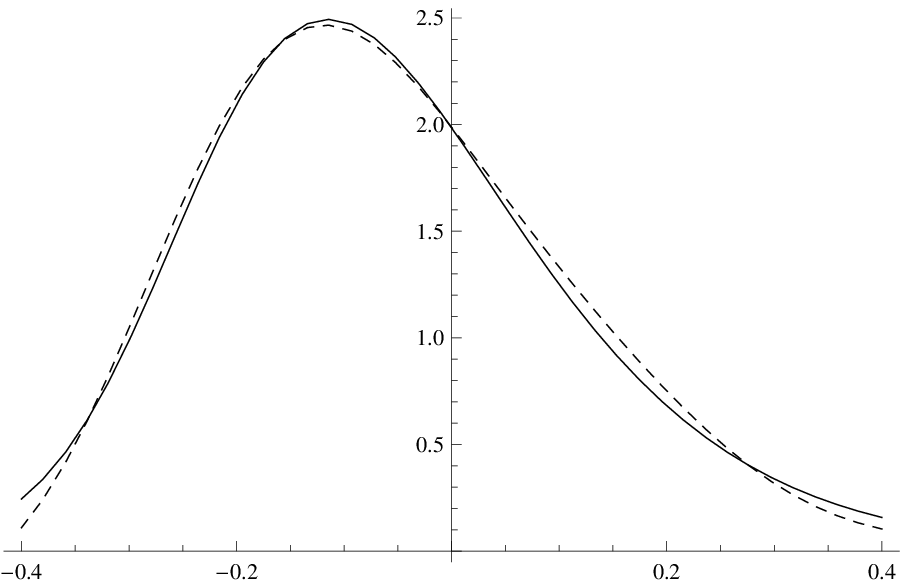}\\ \hline
\end{tabular}
\caption{For the model considered in Section \ref{sec:heston}, we plot the Gamma $\Gamma$ corresponding to the exact option price $u$ (solid black) as well as the Gamma $\bar{\Gamma}_2$ corresponding to our second order option price approximation $\ub_2$ (dashed black).  The units of the horizontal axis are $x$.  Approximate prices are computed using the Taylor series expansion of $\Ac(t)$ as described in Example \ref{example:Taylor}.  We assume the L\'evy measure $\nu$ is as given in Figure \ref{fig:HestonJumps}.  The following parameters are used in all four plots: $\kappa = 1.15$, $\theta = 0.04$, $\del=0.2$, $\rho = -0.7$, $z = \theta$, $k=0$, $m=-0.1$, $s=0.2$, $\lam=2.0$.}
\label{fig:Gamma}
\end{figure}

\begin{table}
\centering
\begin{tabular}{c|c|ccccccccc}
\hline
            & $x$                               &  -0.2   & -0.15   &   -0.1  &     -0.05  &    0.00     &  0.05    &   0.1     &   0.15    &   0.2  \\
\hline
                & $\Gam$                             &  0.01828  &  0.2978   &  2.159    &  5.539    &  6.288    &  3.831    &  1.446    &  0.3779   &  0.0780  \\
t=0.10  & $\bar{\Gam}_2$             &  0.01197  &  0.2897   &  2.1760   &  5.5300   &  6.288    &  3.841    &  1.437    &  0.3748   &  0.0821  \\
                &   $\text{rel. err.}$   &  0.3452   &  0.0273   &  0.0077   &  0.0015   &  0.0001   &  0.0025   &  0.0061   &  0.0082   &  0.0518  \\
\hline
            & $\Gam$                             &  0.5185   &  1.705    &  3.337    &  4.275    &  3.967    &  2.884    &  1.738    &  0.906    &  0.4229  \\
 t=0.25 & $\bar{\Gam}_2$             &  0.5267   &  1.747    &  3.334    &  4.255    &  3.969    &  2.907    &  1.754    &  0.8925   &  0.4016  \\
                &   $\text{rel. err.}$   &  0.0157   &  0.024    &  0.0009 &  0.0046     &  0.0003 &  0.0079     &  0.0094   &  0.0149   &  0.0503  \\
\hline
            & $\Gam$                             &  1.514    &  2.488    &  3.135    &  3.206    &  2.802    &  2.174    &  1.54     &  1.017    &  0.635  \\
 t=0.50 & $\bar{\Gam}_2$             &  1.585    &  2.508    &  3.109    &  3.182    &  2.804    &  2.208    &  1.588    &  1.045    &  0.6244  \\
            &   $\text{rel. err.}$   &  0.0468   &  0.0079   &  0.0081   &  0.0076   &  0.0007   &  0.015    &  0.0309   &  0.0279   &  0.0167  \\
\hline
            & $\Gam$                             &  2.095    &  2.425    &  2.483    &  2.306    &  1.985    &  1.612    &  1.251    &  0.9364   &  0.6814  \\
t=1.00  & $\bar{\Gam}_2$             &  2.134    &  2.418    &  2.452    &  2.280    &  1.988    &  1.656    &  1.331    &  1.028    &  0.7511  \\
                &   $\text{rel. err.}$   &  0.0183   &  0.0032   &  0.0124   &  0.0110   &  0.0015   &  0.0276   &  0.0644   &  0.097    &  0.1023  \\
\hline
\end{tabular}
\caption{Exact Gamma $\Gam$, second order approximation $\bar{\Gam}_2$ and relative error $|(\bar{\Gam}_2-\Gam)/\Gam|$.  Parameters are the same as those in Figure \ref{fig:Gamma}.}
\label{tab:Gamma}
\end{table}

\end{document}